\documentclass[conference]{IEEEtran}

\usepackage[colorlinks]{hyperref}
\usepackage{cite}
\usepackage{float}
\usepackage{amsmath,amssymb,amsfonts,amsthm,nccmath}
\usepackage{physics}
\usepackage{bm}
\usepackage[algo2e]{algorithm2e}
\usepackage[dvipsnames]{xcolor}
\usepackage{subcaption}
\usepackage{multirow}
\usepackage{tabularx,threeparttable}
\usepackage{booktabs,tikz,soul,pifont}
\usepackage{enumitem}
\usepackage{graphicx}
\usepackage[most]{tcolorbox}

\hypersetup{
    linkcolor=blue,
    citecolor=blue,
    urlcolor=blue,
}

\newcommand{\rom}[1]{\expandafter{\romannumeral #1\relax}}

\newtheorem{definition}{Definition}[]

\newtheorem{theorem}{Theorem}[section]

\newtheorem{corollary}[theorem]{Corollary}

\newtheorem{observation}{Observation}[]

\DeclareMathOperator*{\argmin}{argmin}
\DeclareMathOperator{\TV}{TV}

\newcommand{\Chal}{\mathbf{Chal}}
\newcommand{\Adv}{\mathbf{Adv}}

\SetCommentSty{mycommfont}
\RestyleAlgo{ruled}
\newenvironment{game}[1][htb]
{
    
    \begin{algorithm2e}[#1]%
    \DontPrintSemicolon
    \LinesNumbered
    \SetNoFillComment
    \SetInd{0.4em}{0.4em}
    \setlength{\algomargin}{1.5em}
    \linespread{1.2}\selectfont
}{\end{algorithm2e}}

\newtcolorbox{tbox}[1][]{%
    colback=black!5,
    colframe=black!5,
    notitle,
    sharp corners,
    borderline west={1pt}{0pt}{blue},
    enhanced,
    breakable,
    left=0pt,
    right=0pt,
    top=0pt,
    bottom=0pt
}

\makeatletter
\renewcommand\paragraph{%
  \@startsection{paragraph}{4}%
    {\z@}%
    {1ex \@plus .5ex \@minus .2ex}%
    {0em}%
    {\normalfont\normalsize\bfseries}}
\makeatother

\title{Characterizing Privacy Risks of Quantum Machine Learning with Emergent Quantum-Native Access}
\author{
\IEEEauthorblockN{
    \textbf{Liou Tang}\textsuperscript{1},
    \textbf{James Joshi}\textsuperscript{1},
    \textbf{Ashish Kundu}\textsuperscript{2}
}
\IEEEauthorblockA{
    \textsuperscript{1}\textit{University of Pittsburgh, Pittsburgh, PA, USA} \\
    \textsuperscript{2}\textit{Cisco Research, San Jose, CA, USA}
}
liou.tang@pitt.edu \quad jjoshi@pitt.edu \quad ashkundu@cisco.com
}

\begin{document}
\maketitle
\thispagestyle{empty}
\pagestyle{empty}

\begin{abstract}

Quantum Machine Learning (QML) has shown rapid advances by utilizing quantum computing for machine learning tasks. Meanwhile, the privacy risks accompanying QML is also starting to be studied, which inherit privacy leakage channels from ``classical'' ML and also quantum-unique risks. Existing work on privacy-preserving QML largely focuses on a QML-as-a-service scenario, which generally assumes that the QML model owner provides only classical bit outputs to queries, while users (and adversaries) have only classical computing abilities. However, this view is increasingly challenged in a quantum-native world of \emph{quantum-capable users/adversaries}, which may have access to both quantum computing abilities and access to quantum information output from service providers.

In this paper, we aim to bridge this gap by examining membership inference attacks against QML models by demonstrating that increasing quantum access and quantum computing abilities provides provable theoretical privacy leakage and empirical adversarial gain. However, the probabilistic nature of QML introduces a gap between theoretical and empirical adversarial advantage. These results show that existing research on privacy leakage in QML models underestimates privacy leakage in emergent quantum-native access regimes, and we hope to establish a first step in examining potential privacy leakages for QML in the quantum-native world.

\end{abstract}
\begin{IEEEkeywords}
    Quantum Computing, Quantum Machine Learning, Quantum Networking, Membership Inference Attack
\end{IEEEkeywords}

\section{Introduction}\label{sec:intro}

Quantum Computing (QC), which aims to leverage quantum mechanics (e.g., superposition, quantum entanglement), has advanced rapidly since its proposal \cite{Preskill2018QC}. The emerging body of work in Quantum Machine Learning (QML) \cite{Biamonte2017QML,Schuld2019QML,Cerezo2022Challenges} aims to integrate QC with classical Machine Learning (ML) tasks so as to allow representation of complex data in high-dimensional Hilbert space and reduce computational complexity for training QML models.

Yet, the switch from traditional to quantum computing is not frictionless. Existing work has shown that QML models remain vulnerable to various ``classical'' security and privacy attacks against ML, e.g., evasion/adversarial attacks \cite{Liao2021Robust,West2023Benchmark,Nowmi2026SoK}, backdoor \cite{Chu2023Qtrojan,Chu2023QDoor}, data poisoning \cite{Kundu2024Poison}, membership inference \cite{Heredge2025Character,Su2026MIAQMU}, data reconstruction \cite{Heredge2025Character}, and model stealing \cite{Fu2024QuantumLeak,Kundu2024Steal}. Further, QML models are also susceptible to security attacks that exploit quantum mechanisms, e.g., by manipulating quantum noise in QC devices \cite{Ash-Saki2021Noise,Ash-Saki2022Shuttle}.

Privacy risks of QC and QML have received less research attention than security risks until more recently \cite{Heredge2025Character,Su2026MIAQMU}.
A common assumption in the literature is a \emph{classical} adversary that queries the QML model through an API. The model outputs a classical measurement outcome, e.g., a prediction label or logits \cite{Su2026MIAQMU}. However, we argue that this view is insufficient to address the advancement of quantum computing devices \cite{Caleffi2024Survey} and emergent \emph{quantum networking}-based services \cite{Kimble2008Quantum,Wehner2018Quantum} and \emph{distributed quantum computing} \cite{Caleffi2024Survey,Kim2024Million}, in which the user/adversary can potentially submit and receive \emph{quantum information} \cite{Wehner2018Quantum,Devitt2016Cloud,Fu2024QuantumLeak}. These new access regimes could lead to additional privacy leakage in the model output that measurement alone does not capture, as they discard coherence and complementary-basis information \cite{Zurek2003Deco}. We therefore ask the following research questions:
\begin{quote}
\begin{enumerate}
    \item[\textbf{RQ1}:] Does an adversary with quantum computing abilities hold quantum advantages over a classical computing-only adversary in membership inference attacks?

    \item[\textbf{RQ2}:] Does access to quantum information allow the adversary to develop stronger membership inference attacks against QML models?
\end{enumerate}
\end{quote}
We aim to understand how quantum-capable adversaries, or quantum access to the target model, can increase model susceptibility to privacy leakage, particularly against \emph{membership inference attacks} (MIAs) \cite{Shokri2017MIA,Carlini2022MIA,Heredge2025Character,Su2026MIAQMU}. Our contributions are as follows:
\begin{itemize}
    \item We study MIAs against QML models under three access regimes: when the user/adversary only has access to classical outcomes ($\mathsf{C}$), can submit programmable measurements to the model owner/service providers ($\mathsf{M}$), and can receive pre-measurement quantum states ($\mathsf{Q}$).
    \item We formally establish a provable membership inference advantage for $\mathsf{C} \preceq \mathsf{M} \preceq \mathsf{Q}$, when more quantum access allows more privacy leakage (\textbf{RQ2}). Specifically, the adversarial advantage is gained through access to increasingly more quantum information under $\mathsf{M}$ and $\mathsf{Q}$ (\textbf{RQ2}), and can be exploited by quantum algorithms (\textbf{RQ1}), while classical outcome-only access under $\mathsf{C}$ eliminates this quantum advantage.
    \item We establish a bound for the gap between the theoretical optimal membership inference advantage and empirical adversary gain with limited resources and finite surrogate models; our empirical experiments confirm this relationship. We show that the probabilistic nature of QML models contributes to this gap, but does not prevent MIAs; it makes it harder for adversaries to exploit quantum information for privacy inference empirically, even when a theoretical advantage exists (\textbf{RQ2}).
\end{itemize}

\section{Related Works and Preliminaries}\label{sec:lit}

\subsection{Quantum Machine Learning}\label{ssec:lit-qml}

In general, Quantum Machine Learning (QML) \cite{Biamonte2017QML,Cerezo2022Challenges} aims to utilize quantum computation for data representation, model evaluation, or optimization in machine learning (ML) tasks \cite{Nowmi2026SoK}. Since its first proposal, QML has developed into a wide spectrum of algorithms, including quantum kernel methods \cite{Schuld2019QML,Havlivcek2019Supervised}, variational quantum eigensolvers (VQE) \cite{Kandala2017VQE}, variational quantum circuits (VQCs) \cite{Mitarai2018Circuit,Benedetti2019PQC}, quantum neural networks (QNNs) \cite{Cerezo2021VQA} and hybrid quantum-classical neural networks (HQNNs) \cite{Mari2020HQNN,Liu2021HQNN}, etc.\footnote{Many notations, e.g., parameterized quantum circuits (PQCs) and variational quantum circuits (VQCs) are used interchangeably by existing works.} In this paper, we focus exclusively on QNNs for supervised learning tasks, which constitute the principal approaches to machine learning on contemporary quantum processors \cite{Mitarai2018Circuit,Benedetti2019PQC,Cerezo2021VQA}.

\paragraph{Preliminaries to quantum computation} We aim here to introduce a minimal notation necessary for the paper rather than a more detailed overview of QML. The fundamental unit of quantum computation is a \emph{qubit}:
\begin{equation}
    \ket{\psi} = \alpha\ket{0} + \beta\ket{1}
\end{equation}
in which $\alpha,\beta \in \mathbb{C}$, $|\alpha|^2 + |\beta|^2 = 1$. Measuring $\ket{\psi}$ returns $\ket{0}$ with probability $|\alpha|^2$ and $\ket{1}$ with probability $|\beta|^2$ \cite{NC2010}. More generally, an $n$-qubits pure state is a unit vector in a $2^n$-dimensional Hilbert space, written as:
\begin{equation}
    \ket{\psi} = \sum_{i \in \{0,1\}^n} \alpha_i \ket{i}, \quad \sum_i |\alpha_i|^2 = 1.
\end{equation}
Equivalently, quantum states can be represented as \emph{density operators}. A pure state $\ket{\psi}$ corresponds to $\rho = \ketbra{\psi}{\psi}$, which allows us to represent mixed states and quantum noises.

A quantum gate for an $n$-qubit system is represented by a unitary matrix $U \in \mathbb{U}(2^n)$, such that:
\begin{equation}
    U^\dagger U = U U^\dagger = I,
\end{equation}
which transforms a state as:
\begin{equation}
    \ket{\psi} \mapsto U\ket{\psi} \quad \text{or} \quad
    \rho \mapsto U \rho U^\dagger.
\end{equation}
Common single-qubit gates include the Pauli gates $X$, $Y$, $Z$, the Hadamard gate $H$, and parameterized rotation gates $R_X(\theta)$, $R_Y(\theta)$, and $R_Z(\theta)$, while multi-qubit gates generate \emph{entanglement} between qubits that allow information interaction \cite{NC2010}. A quantum circuit is a sequence of such gates.

\paragraph{Variational quantum circuits} A dominating paradigm of QML designed for noisy intermediate-scale quantum computing (NISQ) systems are \emph{variational quantum circuits} (VQCs), which utilize quantum gates with \emph{trainable} parameters (e.g., single-qubit rotations) that allow circuits to learn an input-output pair with objective functions, similar to classical ML models \cite{Mitarai2018Circuit,Benedetti2019PQC,Cerezo2021VQA}. A VQC is represented as a parameterized unitary $U({\bm \theta})$, ${\bm \theta} = (\theta_1,\dots,\theta_p)$. Given an input state $\rho_{\mathrm{in}}$, the circuit produces:
\begin{equation}
    \rho_{\bm \theta} = U({\bm \theta}) \rho_{\mathrm{in}} U({\bm \theta})^\dagger.
\end{equation}
To receive the output, VQCs \emph{measure} $\rho_{\bm\theta}(x)$ through an operator $M$, which collapses the superposition into classical outcomes, e.g., measuring through the Pauli-$Z$ gate produces a real-valued expectation $M(\rho_{\bm\theta}(x)) \in [-1,1]$ \cite{NC2010}. In summary, the VQC is the pipeline:
\begin{equation}
x \xrightarrow{U_{\mathrm{enc}}(x)}
\ket{\phi(x)} \xrightarrow{U({\bm \theta})}
\rho_{\bm \theta}(x) \xrightarrow{M}
M(\rho_{\bm\theta}(x)).
\end{equation}
A common practice to reduce the uncertainty on the model output is through preparing and measuring the same circuit for multiple shots, which we denote as $M(\rho_{\bm\theta}(x))^{\otimes k}$. Each circuit execution is conventionally called a \emph{shot}, and a finite number of shots therefore yields a stochastic estimate of the underlying expectation value.

VQCs are typically trained in an optimization loop alternating between quantum circuit evaluations and classical parameter updates \cite{Mitarai2018Circuit,Cerezo2021VQA}. Consider a supervised training dataset $D = {(x_i,y_i)}_{i=1}^{N}$. The empirical training objective is:
\begin{equation}
    {\bm\theta} = \argmin_{\bm\theta} \frac{1}{N} \sum_{i=1}^{N} \ell \big( M(\rho_{\bm\theta}(x_i)), y_i \big),
\end{equation}
the circuit parameters can therefore be optimized following gradient descent:
\begin{equation}
    {\bm \theta}^{(t+1)} = {\bm \theta}^{(t)} -
    \eta \nabla_{\bm \theta} \ell_D \left( {\bm \theta}^{(t)} \right),
\end{equation}
or with common optimizers, e.g., stochastic gradient descent (SGD) or Adam. The derivatives of expectation values of rotation gates can be calculated under the parameter-shift rule \cite{Biamonte2017QML,Schuld2019QML,NC2010}.

\paragraph{Encoding classical data} To utilize QC for QML tasks, any classical input data (e.g., image and text) need to be converted into a quantum state. An encoding circuit $U_{\mathrm{enc}}(x)$ prepares:
\begin{equation}
    \ket{\phi(x)} = U_{\mathrm{enc}}(x) \ket{0}.
\end{equation}
Two common approaches are \emph{angle encoding} and \emph{amplitude encoding} \cite{Benedetti2019PQC,Nowmi2026SoK}. Angle encoding represents classical features through rotation angles, i.e.:
\begin{equation}
    \ket{\phi(x)} = \bigotimes_{i=1}^{n} R_Y(x_i) \ket{0}.
\end{equation}
Amplitude encoding instead represents (normalized) feature vectors through the amplitudes of a quantum state:
\begin{equation}
    \ket{\phi(x)} = \sum_{i=0}^{d-1} \widetilde{x}_i \ket{i}, \quad
    \sum{i=0}^{d-1} |\widetilde{x}_i|^2 = 1.
\end{equation}
A $d$-dimensional feature be encoded with $\lceil \log_2 d \rceil$ qubits. Our result does not depend on specific encoding schemes.

\paragraph{Quantum Neural Networks} We now give a clear definition to the quantum neural network (QNN) models studied in this paper as follows:

\begin{definition}[Quantum Neural Network \cite{Benedetti2019PQC,Cerezo2021VQA}]\label{def:qml}
    A \emph{quantum neural network} (QNN) is a supervised classifier in which a classical input $x$ is encoded into an $n$-qubit state and subsequently processed by a trainable VQC. Given trainable parameters ${\bm \theta}$, its pre-measurement state is:
    \begin{equation}
        \rho_{\bm \theta}(x)
        = U({\bm \theta}) U_{\mathrm{enc}}(x) \rho_0 U_{\mathrm{enc}}(x)^\dagger U({\bm \theta})^\dagger,
    \end{equation}
    in which $\rho_0 = \ketbra{0^n}{0^n}$ for an $n$-qubit circuit. A \emph{positive operator-valued measure} (POVM) $M = \{E_y\}_y$ maps this state to classical outcomes $M(\rho_{\bm\theta}(x))$, which satisfies:
    \begin{equation}
        p(y \mid \rho_{\bm\theta}) = \trace (E_y \rho_{\bm\theta}),
    \end{equation}
    that is, the probability of each possible measurement outcome is the projected trace.
\end{definition}

\subsection{Membership Inference Attacks}\label{ssec:lit-mia}

Parallel to QML, ML models have long been understood to be susceptible to privacy inference attacks that allow adversaries to infer privacy-sensitive statistics of the model and its training data. Most significantly, \emph{membership inference attacks} (MIAs) aim to infer whether a data sample is present in the training set of a model or not\cite{Shokri2017MIA,Yeom2018Overfit,Carlini2022MIA,Ye2022Enhanced}.

\paragraph{Population-level MIAs} MIA against ML is first proposed by Shokri et al. in \cite{Shokri2017MIA}, in which the adversary has access to surrogate data sampled from the same data distribution $\mathcal{Z}$ as the model, and trains multiple surrogate (shadow) models. An attack model is then trained to classify the membership status of the target sample given its behavior on the surrogate models. Following their line of research, Yeom et al. \cite{Yeom2018Overfit} use the model confidence as the feature for the attack model, as greater confidence indicates a higher probability of membership due to the model overfitting on the training data.

\paragraph{Instance-level MIAs} Beyond population-level MIAs that utilize a singular attack model to distinguish membership status on all target samples, an adversary can design instance-level MIA where hypothesis testing is performed for each individual target sample independently \cite{Carlini2022MIA,Ye2022Enhanced}.
Carlini et al. in \cite{Carlini2022MIA} proposes the likelihood-ratio attack (LiRA), which trains several shadow models divided as two worlds: where the target $z=(x,y)$ is or is not a member of the surrogate model's training set; the shadow models' output probabilities on the target sample at label $y$ are then fitted into two probability distributions, conditioned on the membership. An adversary determines the membership status by deciding which distribution is more likely to result in the target model's behavior. Ye et al. \cite{Ye2022Enhanced} follow a similar intuition, but use the model confidence on $(x,y)$.

Following Carlini et al. \cite{Carlini2022MIA}, we define membership inference attack against ML/QML in Game \ref{game:mia} as a game between the challenger (model owner) $\Chal$ and the adversary $\Adv$: $\Chal$ provides two models trained with the same learning algorithm on neighboring datasets that differ only on the target sample $z$, $\Adv$ in turn develops a decision rule $h: \mathcal{Z} \times {\bm\Theta} \to \{0,1\}$ that aims to distinguish between the two models.

\begin{game}[ht]
\caption[F]{Membership Inference Game ($\mathsf{MI}$)}\label{game:mia}
    \KwIn{Target sample $z=(x,y)$, training algorithm $\mathcal{A}$, neighboring datasets $D_0, D_1 \in \mathcal{Z}^n$.}
    \uIf{$b = 0$}{
        ${\bm\theta}_b \gets \mathcal{A}(D_0 \mid z \notin D_0)$
        \tcp*[l]{non-member}
    }
    \Else{
        ${\bm\theta}_b \gets \mathcal{A}(D_1 \mid z \in D_1)$
        \tcp*[l]{member}
    }
    $\hat{b} \gets h(z, {\bm\theta})$ \;
    \KwOut{$\mathbf{1}[\hat{b} = b]$, membership inference advantage $L := 2\Pr[\hat{b}=b]-1$, $L \in [0,1]$.}
\end{game}

\subsection{Privacy Inference Attacks Against QML}\label{ssec:lit-priv-qml}

As discussed in Sec. \ref{sec:intro}, the privacy risks of ML are often inherited in QML models \cite{Heredge2025Character,Su2026MIAQMU}. Early work largely approached QML privacy from the defensive perspective. Zhou et al. \cite{Zhou2017DPQC} proposed the notion of \emph{quantum differential privacy} (QDP), which establishes DP for QC tasks. Watkins et al. \cite{Watkins2023QDP} further applied QDP to training hybrid quantum-classical models to bound the membership leakage of the model's training data.

Heredge et al. \cite{Heredge2025Character} provide a foundational characterization of privacy risks in QML, which demonstrates a privacy-trainability trade-off for QML models. The authors establish the connection between the model's susceptibility to both quantum and classical data reconstruction attacks with VQC gradients to the dynamical Lie algebra (DLA) of the VQC.
Kumar et al. \cite{Kumar2023Expressive} give a related result, arguing that expressive quantum encodings make data reconstruction difficult and suggesting that QML provides inherent resilience to data reconstruction attacks. Importantly, however, the adversarial information exposed by the model remains classical (e.g., gradients and architecture circuit descriptions), even when QC are used in the attack.

The closest work to our paper is Su et al. \cite{Su2026MIAQMU}, who proposed a (population-level) membership inference attack against QNNs and HQNN models, which can subsequently be mitigated through (quantum) machine unlearning (MU) \cite{Bourtoule2021SISA,Nguyen2025Survey,Shaik2025QMU,Crivoi2026QMU}.
The authors experimentally demonstrate membership leakage in both noiseless simulations and on quantum hardware. Similar to Heredge et al. \cite{Heredge2025Character}, their proposed attack also takes measured QNN outcomes as inputs, and additionally demonstrates that the number of measurement shots affects the observable membership leakage.

Overall, while existing works on MIAs against QML models \cite{Heredge2025Character,Su2026MIAQMU} establish the existence of privacy leakage through the measured \emph{classical} outcome, they do not consider an adversary with access to \emph{quantum} states as we discussed in Sec. \ref{sec:intro}. A quantum-capable adversary with quantum-native access may apply alternative measurements and potentially extract information discarded by a fixed readout \cite{Zurek2003Deco}. Our work provides the first explicit examination of these stronger access regimes and their implications for privacy leakage through membership inference attacks in QML.

\section{Quantum-Native Access and Membership Risk}\label{sec:theory}

\begin{table*}[t]
    \centering
    \small
    \renewcommand{\arraystretch}{1.1}
    \caption{Access regimes considered in Sec. \ref{sec:theory} and \ref{sec:eval}.}\label{tab:access}
    \begin{tabular}{lll}
        \toprule
        \textbf{Access Regime} & \textbf{Interface} & \textbf{Notes} \\
        \midrule
        Classical outcome-only $\mathsf{C}_k$ &
        \parbox[t]{170pt}{
        The user queries the QNN with classical input $x$. The model owner prepares $k$ copies of $x$, the QNN outputs the pre-measurement state $\rho_{\bm\theta}(x)^{\otimes k}$.
        The model owner fixes measurement operator $M_0 = \{E_y\}_{y \in \mathcal{Y}}$ and returns only the classical outcomes $\hat{y}^k \sim M(\rho_{\bm\theta}(x))^{\otimes k}$
        } &
        \parbox[t]{170pt}{The adversary can perform arbitrary post-processing on the classical outcomes $\hat{y}^k$ but cannot access the quantum state $\rho_{\bm \theta}(x)$. Most existing works with ``classical'' adversaries follow this access regime \cite{Heredge2025Character,Su2026MIAQMU}.} \\
        \\
        
        Programmable measurement $\mathsf{M}_k$ &
        \parbox[t]{170pt}{
        The user submits $x$ along with a measurement $M_u$ (we restrict the user-selected $M_u$ to be repeated Pauli measurements) to the model owner for $x$, the model owner returns $M_u(\rho_{\bm\theta}(x))^{\otimes k}$.
        } &
        \parbox[t]{170pt}{
        While the adversary still only receives classical outcomes, they can decide the measurements used to potentially achieve higher membership inference advantage. Emerging quantum cloud services can provide the programmable measurement for $\mathsf{M}_k$ \cite{Devitt2016Cloud,Fu2024QuantumLeak}.
        } \\
        \\

        Quantum state access $\mathsf{Q}_k$ &
        \parbox[t]{170pt}{
        The user submits $x$, for which the model owner returns $\rho_{\bm\theta}(x)^{\otimes k}$.
        } &
        \parbox[t]{170pt}{
        The adversary receives the pre-measurement quantum state without information loss from POVM $M$, as discussed by Zurek \cite{Zurek2003Deco} and Su et al. \cite{Su2026MIAQMU}. Therefore, the adversary has strictly more information than regime $\mathsf{M}_k$ and by extension $\mathsf{C}_k$.
        } \\
        \\


        
        \bottomrule
    \end{tabular}
\end{table*}

\subsection{Threat Model}\label{ssec:threat-model}

\paragraph{Adversary's goal} For a QNN given by Def. \ref{def:qml}, we consider an adversary $\Adv$ that performs membership inference attacks (Game \ref{game:mia}) against the model aims to infer whether a target sample $z = (x,y)$ is in the training set of the model $\bm\theta$, i.e., $\mathbf{1}_D(z)$.

\paragraph{Adversary's abilities} We assume an adversary has quantum computing abilities and can train, execute and measure quantum circuits and VQCs. We follow the general assumption that the adversary can sample surrogate dataset(s) from the same distribution as the target model \cite{Shokri2017MIA,Carlini2022MIA,Su2026MIAQMU}.

\paragraph{Access regimes} We discuss three \emph{access regimes} of the QNN, summarized in Table \ref{tab:access}. We aim to demonstrate that, with increasingly more access to quantum information, the adversary can achieve more membership inference advantage (\textbf{RQ2}), i.e., $\mathsf{C} \preceq \mathsf{M} \preceq \mathsf{Q}$.


\subsection{Membership Inference Advantages under Different Access Regimes}\label{ssec:mia-theory}

Before discussing MIA privacy leakages with quantum information, we focus on the commonly employed classical access regime $\mathsf{C}$ \cite{Heredge2025Character,Kumar2023Expressive,Su2026MIAQMU}. Intuitively, for (\textbf{RQ1}) Immediately, we arrive at the first result formalized in Theorem \ref{th:post-measure-equiv}.

\begin{theorem}[Post-measurement equivalence]\label{th:post-measure-equiv}
    Under the access regime $\mathsf{C}_k$ in which the adversary receives only classical outputs, for any quantum-capable adversary with quantum computing abilities, there exists an adversary with only classical computing abilities that has the same membership inference advantage over random guessing.
\end{theorem}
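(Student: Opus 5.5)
The plan is to reduce the quantum-capable adversary to a classical one by simulating it. Under $\mathsf{C}_k$, everything the adversary learns about ${\bm\theta}_b$ is the classical string $\hat{y}^k \in \mathcal{Y}^k$, whose law $P_b$ is determined by the fixed POVM $M_0=\{E_y\}$:
\begin{equation}
    P_b(\hat{y}^k) = \prod_{j=1}^{k} \trace\left(E_{\hat{y}_j}\, \rho_{{\bm\theta}_b}(x)\right).
\end{equation}
The adversary's other inputs (the target $z$, the algorithm $\mathcal{A}$, and samples from $\mathcal{Z}$ for shadow models) are also classical. So I will model any quantum-capable adversary as a classical input $(z,\hat{y}^k,r)$, with $r$ its classical auxiliary information and randomness, fed into some preparation of ancilla states, unitaries, and a final two-outcome POVM $\{F_0,F_1\}$. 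Here $F_{\hat b}$ may depend on $(z,\hat{y}^k,r)$. Any intermediate measurements can be deferred to the end, so this form covers adaptive strategies and quantum shadow-model training.

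Next I will write the adversary's decision as a Markov kernel. For each fixed classical input, the quantum computation produces a well-defined acceptance probability
\begin{equation}
    q(\hat b \mid z,\hat{y}^k,r) = \trace\left(F_{\hat b}\,\sigma_{z,\hat{y}^k,r}\right),
\end{equation}
where $\sigma_{z,\hat{y}^k,r}$ is the state the adversary prepares. This is a stochastic map from classical strings to $\{0,1\}$, and nothing in it depends on ${\bm\theta}_b$ except through $\hat{y}^k$. Hence
\begin{equation}
    \Pr[\hat b = b] = \tfrac12 \sum_{b} \sum_{\hat{y}^k} P_b(\hat{y}^k)\, \mathbb{E}_r\, q(b\mid z,\hat{y}^k,r).
\end{equation}

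Then I will build the classical adversary. It receives the same $\hat{y}^k$ and $r$, classically computes $q(\cdot\mid z,\hat{y}^k,r)$ by exact state-vector simulation of the finite-dimensional circuit (Born's rule gives $\trace(F\sigma)$ as a finite computation), and outputs $\hat b$ by sampling from it. The joint distribution of $(b,\hat b)$ is then identical, so $L = 2\Pr[\hat b=b]-1$ is unchanged. As a sanity check and cleaner alternative, I will also note that by data processing both adversaries are bounded by $\TV(P_0,P_1)$. The classical likelihood-ratio (Neyman--Pearson/Helstrom-for-commuting-states) test attains this bound. It does not need the adversary's circuit at all, so the supremum of the advantage over quantum adversaries equals that over classical ones.

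The main obstacle is the modelling step: the argument needs the quantum adversary to have no quantum side channel to the model. That means no retained entanglement with $\rho_{\bm\theta}(x)$ and no quantum inputs beyond what it prepares itself, and I will state this explicitly as part of $\mathsf{C}_k$. A secondary subtlety is that the theorem concerns equal advantage, not efficiency. The classical simulation may take exponential time, so I will phrase the result information-theoretically. Where efficiency matters I will fall back on the $\TV(P_0,P_1)$ argument, in which the classical likelihood-ratio test is optimal regardless of the quantum adversary's construction.
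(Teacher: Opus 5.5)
Your proof is correct, but your main argument takes a different route from the paper's.

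\textbf{The paper's route.} It never examines an individual quantum adversary. It casts $\mathsf{C}_k$ membership inference as binary hypothesis testing between $p_0^{(k)}$ and $p_1^{(k)}$, and notes that the Bayes-optimal classical test achieves advantage $\TV(p_0^{(k)},p_1^{(k)})$. It then embeds the outcomes as commuting diagonal states $\sigma_b^{(k)}$, so that Helstrom's theorem caps every quantum measurement at the same value. That is essentially your secondary ``sanity check''; you use data processing where the paper uses Helstrom on diagonal states.

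\textbf{Your route.} You treat an arbitrary quantum adversary, including quantum shadow-model training, as a Markov kernel from classical data to $\{0,1\}$. You then reproduce that kernel classically, so the joint law of $(b,\hat b)$ is preserved exactly.

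\textbf{What each buys.} Your argument proves the statement as literally worded: the \emph{same} advantage for each given adversary. The paper's argument only shows that the two optimal advantages coincide; matching a weaker quantum adversary exactly needs an extra step, mixing the likelihood-ratio test with a fair coin. Your argument also makes explicit the assumption the theorem really rests on, namely no retained quantum correlation with $\rho_{{\bm\theta}_b}(x)$, which the paper leaves implicit. In return, the paper's route yields the closed form $L_{\mathsf{C}}^{(k)}=\TV(p_0^{(k)},p_1^{(k)})$, which is reused in the hierarchy and finite-resource theorems. The simulation argument alone does not give that.

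\textbf{Two small corrections to your fallback.} First, in Game~\ref{game:mia} the model ${\bm\theta}_b$ is itself random. For the likelihood-ratio test to be an actual strategy, use the mixture $\mathbb{E}_{{\bm\theta}_b}\bigl[\prod_j \trace(E_{\hat y_j}\rho_{{\bm\theta}_b}(x))\bigr]$, as the paper's $p_b^{(k)}$ does, rather than the product law for a fixed ${\bm\theta}_b$. Second, the fallback does not recover efficiency. Evaluating $P_b$ generally requires simulating the trained QNN, so both routes are information-theoretic. This is consistent with the paper's remark after the theorem that quantum algorithms may still offer computational speedups.
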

\begin{proof}
    From Def. \ref{def:state-discrepancy}, given the POVM operator $M_0 = \{E_y\}_{y \in \mathcal{Y}}$, the user receives the output $y^k$ following the distribution:
    \begin{equation}
        y^k \sim p_b^{(k)}(y_1,\ldots,y_k) =
        \mathbb E \left[ \prod_{j=1}^k \trace (E_{y_j}\rho_{{\bm\theta}_b}(x)) \right].
    \end{equation}
    Membership inference from the measurement outputs is therefore a binary hypothesis testing between $p_0^{(k)}$ and $p_1^{(k)}$. Under equal priors, the optimal success probability is given by:
    \begin{equation}
        P_{\mathrm{succ}}^* = \frac{ 1 + \TV \left( p_0^{(k)},p_1^{(k)} \right) }{2},
    \end{equation}
    and the corresponding membership inference advantage over random guessing:
    \begin{equation}
        L_\mathsf{C}^{(k)} = \TV \left( p_0^{(k)},p_1^{(k)} \right)
    \end{equation}
    The same conclusion follows directly from quantum hypothesis testing. Write the classical outputs as \emph{diagonal states}:
    \begin{equation}
        \sigma_b^{(k)} = \sum_{y^k} p_b^{(k)}(y^k) \ketbra{y^k}{y^k},
    \end{equation}
    with trace distance:
    \begin{equation}
    \begin{aligned}
        \frac{1}{2} \left\| \sigma_1^{(k)}-\sigma_0^{(k)} \right\|_1
        &= \frac{1}{2} \sum_{y^k} \left| p_1^{(k)}(y^k)-p_0^{(k)}(y^k) \right| \\
        &= \TV \left( p_0^{(k)}, p_1^{(k)} \right). \\
    \end{aligned}
    \end{equation}
    By Helstrom's theorem \cite{Helstrom1969Detection,Bae2015Discri}, the optimal success probability for any unrestricted quantum measurement to distinguish these states is:
    \begin{equation}
    \begin{aligned}
        P_{\mathrm{succ}, \mathrm{QC}}^*
        &= \frac{1}{2} + \frac{1}{4} \left\| \sigma_1^{(k)}-\sigma_0^{(k)} \right\|_1 \\
        &= \frac{ 1 + \TV \left( p_0^{(k)},p_1^{(k)} \right) }{2}.
    \end{aligned}
    \end{equation}
    Therefore, under $\mathsf{C}_k$, quantum and classical adversaries have identical optimal membership advantage with only classical measurement outcomes.
\end{proof}

\begin{tbox}
\begin{observation}[Quantum advantage under $\mathsf{C}_k$]\label{obs:advantage-C}
    Theorem \ref{th:post-measure-equiv} clearly establishes that a quantum-capable adversary does not possess quantum advantage for $\mathsf{MI}$ under the classical-outcome-only regime $\mathsf{C}_k$. Quantum algorithms may nevertheless reduce the computational complexity of classical statistical testing tasks with additional oracle access to the underlying distributions \cite{Bravyi2011Distributions,Gilyen2019Distributional}.
\end{observation}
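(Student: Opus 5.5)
The statement to prove is Observation \ref{obs:advantage-C}: under $\mathsf{C}_k$ a quantum-capable adversary gains no advantage in $\mathsf{MI}$. I would derive it directly from Theorem \ref{th:post-measure-equiv}, making explicit that the reduction holds for \emph{every} quantum adversary and not only the optimal one. This shows the quantum-optimal advantage is also classically attainable.

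First I model an arbitrary quantum-capable adversary under $\mathsf{C}_k$. Its only information about ${\bm\theta}_b$ is the classical string $y^k \sim p_b^{(k)}$. It may embed $y^k$ into a register, adjoin ancillas in a fixed state $\tau$ that does not depend on $b$, apply any channel $\Phi$, and measure a two-outcome POVM $\{F_0,F_1\}$. Its guessing probability for a fixed $y^k$ is then $q(\hat b \mid y^k) = \trace\!\left(F_{\hat b}\,\Phi(\ketbra{y^k}{y^k}\otimes\tau)\right)$. This $q$ is a classical randomized decision rule $h(z,y^k)$, independent of $b$. A classical adversary can therefore sample from $q(\cdot\mid y^k)$, possibly inefficiently, since the statement concerns advantage and not running time, and obtain exactly the same $\Pr[\hat b = b]$. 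Hence the two adversaries have the same advantage $L$.

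Second, for the optimal value, I invoke Theorem \ref{th:post-measure-equiv}. Both optima equal $\TV(p_0^{(k)},p_1^{(k)})$: the classical optimum via the Neyman--Pearson/likelihood-ratio rule, and the quantum optimum via Helstrom applied to the commuting diagonal states $\sigma_b^{(k)}$. For commuting states the Helstrom projector onto the positive part of $\sigma_1^{(k)}-\sigma_0^{(k)}$ is diagonal in the computational basis, so it is itself the classical likelihood-ratio test.

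The main obstacle is justifying that the adversary cannot obtain any $b$-dependent quantum side resource under $\mathsf{C}_k$, for example entanglement with the owner's device or reuse of the state after measurement. I would handle this by appeal to the definition of $\mathsf{C}_k$ in Table \ref{tab:access}: the interface returns only $\hat y^k$. By data processing, the trace distance of anything the adversary holds is then at most $\TV(p_0^{(k)},p_1^{(k)})$. The remark about computational speedups with oracle access falls outside the information-theoretic advantage, so I would only note it and not prove it.
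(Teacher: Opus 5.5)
Your proposal is correct. Your second step is exactly the paper's argument. The paper justifies the observation solely by Theorem~\ref{th:post-measure-equiv}, whose proof computes the classical optimum $\TV(p_0^{(k)},p_1^{(k)})$ and then uses Helstrom's theorem on the diagonal embeddings $\sigma_b^{(k)}$ to show that an unrestricted quantum measurement attains the same value.

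Where you genuinely differ is your first and third steps. The paper only compares suprema, so strictly it shows that the best quantum advantage equals the best classical one. You instead give an adversary-by-adversary simulation: any quantum pipeline $(\tau,\Phi,\{F_0,F_1\})$ applied to a classical input induces a $b$-independent kernel $q(\hat b\mid y^k)$, which a classical adversary can sample. This matches the ``for any quantum-capable adversary there exists a classical one with the same advantage'' phrasing of Theorem~\ref{th:post-measure-equiv} literally, and it needs no Helstrom computation. To get that phrasing from the paper's equality of optima, one needs an extra step, e.g.\ mixing the optimal test with a fair coin to realize every advantage in $[0,\TV(p_0^{(k)},p_1^{(k)})]$. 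Your data-processing remark also makes explicit why no $b$-dependent quantum side resource can help; the paper leaves this implicit in the definition of $\mathsf{C}_k$.

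Conversely, the simulation argument by itself does not give the value of the advantage. The paper's Helstrom computation is what yields the closed form $L_\mathsf{C}^{(k)}=\TV(p_0^{(k)},p_1^{(k)})$ reused in Theorem~\ref{th:hierachy}, so your second step is still needed for that. Your treatment of the oracle-access sentence matches the paper: it concerns running time rather than information-theoretic advantage, consistent with your ``possibly inefficiently'' caveat, and the paper likewise only cites it.
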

\end{tbox}

Recall the LiRA membership inference attack proposed by Carlini et al. \cite{Carlini2022MIA}, in which the adversary aims to distinguish (non-quantum) model outputs of two worlds: those trained with and without the target sample as a member of the training set. The same intuition can be adopted for MIAs against QML models; we define:

\begin{definition}[Membership-Conditioned State Discrepancy]\label{def:state-discrepancy}
    For a QNN model given by Def. \ref{def:qml}, denote the membership-conditioned expected pre-measurement state as:
    \begin{equation}\label{eq:omega}
        \Omega_b(x) = \mathbb{E}[\rho_{{\bm\theta}_b}(x)^{\otimes k}],
    \end{equation}
    in which ${\bm\theta}_b$ is the QML model trained under bit $b$ in Game \ref{game:mia}, shared through $k$ copies of $x$. $\Omega_b(x)$ takes expectation over different ${\bm\theta}_b$ trained with different surrogate datasets. The membership-conditioned (pre-measurement) state discrepancy induced by $x$ is therefore:
    \begin{equation}\label{eq:delta}
        \Delta(x) = \Omega_1(x) - \Omega_0(x),
    \end{equation}
    that is, how much $x$ changes the model's output state.
\end{definition}

\begin{theorem}[Access regime hierarchy]\label{th:hierachy}
    For any target sample $z=(x,y)$ whose membership status we are interested in, fix the target model $\bm\theta$, we have:
    \begin{equation}
        L_\mathsf{C}(z; \bm\theta) \leq L_\mathsf{M}(z; \bm\theta) \leq L_\mathsf{Q}(z; \bm\theta),
    \end{equation}
    that is, the access regimes have a strict hierarchy in quantum advantage.
\end{theorem}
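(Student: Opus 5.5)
Both inequalities follow from one principle: a more powerful access regime can simulate every attack available in a weaker one. We therefore write each advantage as a supremum over a nested family of measurements on the same pair of hypotheses $\Omega_0(x)$ and $\Omega_1(x)$ from Def.~\ref{def:state-discrepancy}. The weaker regime's family is contained in the stronger one's, so its supremum cannot be larger.

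\textbf{Regime $\mathsf{Q}_k$.} The adversary holds one of the two states $\Omega_b(x)$ and may apply any POVM to it. By Helstrom's theorem, already used in the proof of Theorem~\ref{th:post-measure-equiv}, the optimal advantage is
\begin{equation}
L_\mathsf{Q} = \tfrac12\|\Omega_1(x)-\Omega_0(x)\|_1 = \tfrac12\|\Delta(x)\|_1 .
\end{equation}

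\textbf{Regime $\mathsf{M}_k$.} The adversary chooses a measurement $M_u$ from the allowed class $\mathcal{M}$ of repeated Pauli measurements. This yields outcome distributions
\begin{equation}
q_b^{M_u}(y^k)=\trace\bigl(E^{M_u}_{y^k}\,\Omega_b(x)\bigr).
\end{equation}
Exactly as in Theorem~\ref{th:post-measure-equiv}, the advantage for a fixed $M_u$ is $\TV(q_0^{M_u},q_1^{M_u})$. Hence
\begin{equation}
L_\mathsf{M}=\sup_{M_u\in\mathcal{M}}\TV\bigl(q_0^{M_u},q_1^{M_u}\bigr).
\end{equation}

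\textbf{Regime $\mathsf{C}_k$.} Here the measurement is fixed to $M_0^{\otimes k}$, so
\begin{equation}
L_\mathsf{C}=\TV\bigl(p_0^{(k)},p_1^{(k)}\bigr).
\end{equation}

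\textbf{First inequality, $L_\mathsf{C}\le L_\mathsf{M}$.} This needs $M_0^{\otimes k}\in\mathcal{M}$, or at least that $M_0$ can be reproduced by classically post-processing some admissible $M_u$. The computational-basis Pauli-$Z$ readout is itself a repeated Pauli measurement. For a general fixed $M_0$, I would either extend $\mathcal{M}$ to include the owner's default readout, since the user can always request it, or note that every QNN readout in Def.~\ref{def:qml} is a coarse-graining of a $Z$-basis measurement. Post-processing cannot increase $\TV$ (classical data processing inequality), so the supremum over $\mathcal{M}$ is at least $L_\mathsf{C}$.

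\textbf{Second inequality, $L_\mathsf{M}\le L_\mathsf{Q}$.} For each POVM $\{E_{y^k}\}$, the measurement defines a quantum channel $\mathcal{E}(\Omega)=\sum_{y^k}\trace(E_{y^k}\Omega)\ketbra{y^k}{y^k}$. The resulting diagonal states have trace distance equal to $\TV(q_0,q_1)$, by the computation in Theorem~\ref{th:post-measure-equiv}. Contractivity of the trace norm under CPTP maps then gives $\TV(q_0,q_1)\le\tfrac12\|\Delta(x)\|_1$. Taking the supremum over $M_u$ yields $L_\mathsf{M}\le L_\mathsf{Q}$. Operationally, a $\mathsf{Q}_k$ adversary can simply perform $M_u$ itself and then run the $\mathsf{M}_k$ attack, which is a direct simulation argument.

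\textbf{Main obstacle.} The difficulty is modelling, not calculation.

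First, the statement fixes $\bm\theta$, whereas advantage is defined through the game, that is, through the averaged states $\Omega_b$. I would read $L(z;\bm\theta)$ as the optimal advantage of the hypothesis test defined by Game~\ref{game:mia} for this $z$. Both hierarchy arguments go through verbatim whether one uses the averaged states $\Omega_b$ or the point states $\rho_{\bm\theta_b}(x)^{\otimes k}$, because data processing applies to each.

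Second, the statement calls the hierarchy ``strict'', but the argument only gives $\le$. Strictness requires exhibiting a witness. For $\mathsf{Q}$ over $\mathsf{M}$, I would use a $\Delta(x)$ whose Helstrom projector is entangled across the $k$ copies, so that no product Pauli measurement attains $\tfrac12\|\Delta\|_1$. For $\mathsf{M}$ over $\mathsf{C}$, I would use a single-qubit example in which $\Omega_0$ and $\Omega_1$ differ only by a phase, so they agree under $Z$ but are distinguished by $X$. I would present these witnesses as remarks showing that the inequalities can be strict, rather than claim strictness for every $z$.
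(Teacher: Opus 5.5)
Your proposal is essentially the paper's proof. Both identify $L_\mathsf{C}$, $L_\mathsf{M}$, $L_\mathsf{Q}$ as optimal advantages over nested families of measurements on $\Omega_0(x),\Omega_1(x)$: the fixed $M_0$, a supremum over admissible $M_u$, and Helstrom's $\tfrac12\|\Delta(x)\|_1$. You usefully make explicit what the paper's ``the selection of measurement operators expands'' glosses over: trace-norm contractivity, the need for $M_0$ to be admissible in $\mathsf{M}_k$, and the fact that only $\le$, not strictness, is established. One caution: of your two fixes for $L_\mathsf{C}\le L_\mathsf{M}$, only adding the owner's readout to $\mathcal{M}$ works in general, because Def.~\ref{def:qml} permits an arbitrary POVM $M_0=\{E_y\}_y$, which need not be a coarse-graining of a $Z$-basis measurement (it is one for the $ZIII$ readout used in Sec.~\ref{sec:eval}).
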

\begin{proof}
    By Theorem \ref{th:post-measure-equiv}, we have:
    \begin{equation}\label{eq:classical-opt}
        L_\mathsf{C}(z; \bm\theta) = \TV \left( p_0^{(k)},p_1^{(k)} \right)
    \end{equation}
    Similarly, we can write:
    \begin{equation}\label{eq:measurement-opt}
        L_\mathsf{M}(z; \bm\theta) =
        \sup_{M} \TV \left( M\big(\Omega_0(x)\big), M\big(\Omega_1(x)\big) \right)
    \end{equation}
    Further, under $\mathsf{Q}_k$, the adversary receives the $k$ quantum states. By Helstrom's theorem \cite{Helstrom1969Detection}, now applied to a mixed state:
    \begin{equation}\label{eq:quantum-opt}
        L_\mathsf{Q}(z; \bm\theta) = \frac{1}{2} \left\| \Delta(x) \right\|_1,
    \end{equation}
    $\Delta(x)$ is given by Eq. \ref{eq:delta}.
    The adversary obtains strictly more information from $\mathsf{C}_k$ to $\mathsf{M}_k$ to $\mathsf{Q}_k$ as the selection of measurement operators expands. The inequalities naturally follow.
\end{proof}

\begin{tbox}
\begin{observation}[Quantum advantage through multi-shots]\label{obs:advantage-k}
    Beyond Theorem \ref{th:hierachy}, we also note that the adversary gains are non-decreasing in the model measurement budget $k$ for the same access regime $\mathsf{I}$, formally:
    \begin{equation}
        L_\mathsf{I}^{(k+1)} \geq L_\mathsf{I}^{(k)},
    \end{equation}
    that is, outcomes from more shots provide strictly more information, which allows $\Adv$ to design stronger MIAs.
\end{observation}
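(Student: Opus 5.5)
The statement to prove is Observation~\ref{obs:advantage-k}: $L_\mathsf{I}^{(k+1)} \ge L_\mathsf{I}^{(k)}$ for each regime $\mathsf{I} \in \{\mathsf{C},\mathsf{M},\mathsf{Q}\}$. I would prove it by a single data-processing (simulation) argument rather than by computing the three quantities separately. Any adversary that uses $k$ shots can be simulated by one that receives $k+1$ shots and throws the last one away. Its success probability in Game~\ref{game:mia} is therefore unchanged, and taking the supremum over adversaries gives the inequality. Concretely, I first fix the optimal expressions from the proofs of Theorems~\ref{th:post-measure-equiv} and~\ref{th:hierachy}:
\begin{itemize}
\item $L_\mathsf{C}^{(k)} = \TV(p_0^{(k)},p_1^{(k)})$;
\item $L_\mathsf{M}^{(k)}$ is the supremum over admissible (repeated Pauli) measurements of the corresponding total variation distance;
\item $L_\mathsf{Q}^{(k)} = \tfrac12\|\Omega_1^{(k)}(x)-\Omega_0^{(k)}(x)\|_1$, where $\Omega_b^{(k)}$ is the $k$-copy state of Def.~\ref{def:state-discrepancy}.
\end{itemize}

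The key structural step is to show that the $k$-shot object is the marginal, or partial trace, of the $(k+1)$-shot object. For $\mathsf{Q}$, linearity of expectation and of the partial trace give
\[
\trace_{k+1}\,\Omega_b^{(k+1)}(x)=\mathbb{E}\big[\trace_{k+1}\,\rho_{\bm\theta_b}(x)^{\otimes (k+1)}\big]=\mathbb{E}\big[\rho_{\bm\theta_b}(x)^{\otimes k}\big]=\Omega_b^{(k)}(x),
\]
because $\trace\rho=1$. Since the partial trace is CPTP, the trace norm is contractive under it, and therefore $\tfrac12\|\Delta^{(k)}\|_1\le\tfrac12\|\Delta^{(k+1)}\|_1$.

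For $\mathsf{C}$, summing $p_b^{(k+1)}(y_1,\dots,y_{k+1})$ over $y_{k+1}$ uses $\sum_y E_y=I$, which yields $p_b^{(k)}$. Total variation contracts under marginalization, so $L_\mathsf{C}^{(k)} \le L_\mathsf{C}^{(k+1)}$.

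For $\mathsf{M}$, I would argue pointwise in the measurement. For each admissible $k$-shot measurement $M$, extend it to $k+1$ shots by adding any admissible measurement on the last copy. The $(k+1)$-shot output distribution then marginalizes to the $k$-shot one, so the total variation distance can only grow. Taking the supremum over $M$ on both sides gives $L_\mathsf{M}^{(k)}\le L_\mathsf{M}^{(k+1)}$.

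The main obstacle is the $\mathsf{M}$ case. There I must check that the admissible measurement class is closed under this extension-and-marginalization: every $k$-shot repeated Pauli measurement must be the restriction of some $(k+1)$-shot admissible one. If ``repeated'' is read strictly, meaning the same Pauli on every copy, this still holds by repeating the same Pauli once more, so I would adopt that reading explicitly.

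A secondary subtlety is that $\Omega_b$ averages over training randomness before tensoring, so the $k$ copies are correlated rather than i.i.d. The marginalization identity above holds regardless, since it only uses linearity and $\trace\rho=1$. I would also note that the observation's phrase ``strictly more information'' should be read as non-strict monotonicity; the argument gives only $\ge$, and equality can occur, for instance when $\Delta=0$.
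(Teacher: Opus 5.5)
Your proposal is correct and follows the paper's own reasoning. The paper justifies the observation only with the one-line remark that $k+1$ shots carry at least the information of $k$ shots, and your discard-the-last-shot reduction is the rigorous version of that remark: you show the $k$-shot object is the marginal of the $(k+1)$-shot one (partial trace of $\Omega_b^{(k+1)}(x)$ for $\mathsf{Q}$, summing out $y_{k+1}$ via $\sum_y E_y = I$ for $\mathsf{C}$, repeating the Pauli once more for $\mathsf{M}$) and then apply contractivity of trace and total-variation distance. Your two caveats also hold: averaging over ${\bm\theta}_b$ correlates the copies harmlessly, since only linearity and unit trace are used, and both your argument and the statement (despite its word ``strictly'') give only the non-strict inequality.
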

\end{tbox}

\subsection{Membership Inference Advantages in Practice}\label{ssec:mia-empirical}

In Theorem \ref{th:hierachy} (Eq. \ref{eq:classical-opt} through \ref{eq:quantum-opt}), we provide the closed-form optimal membership inference advantage for all access regimes we considered. However, this optimal MI advantage is information-theoretic and is achieved only when the adversary knows the entire membership-conditioned outcome distributions. We now study the question more closely and ask:
\begin{quote}
    Given an adversarial resource constraint, when the adversary has imperfect knowledge of the membership-conditioned outcome distributions, how much membership inference advantage remains?
\end{quote}
We formalize it as follows:

\begin{theorem}[Finite-Resource Attacks]\label{thm:finite}
    Fix target sample $z=(x,y)$, access regime $I\in\{ \mathsf{M}, \mathsf{Q} \}$ and the $k$-shot measurement budget $k$. Assume $\Adv$ can only train $K$ pairs of shadow/surrogate models $({\bm\theta}^{s}_{0, 1}, {\bm\theta}^{s}_{1, 1})$ through $({\bm\theta}^{s}_{0, K}, {\bm\theta}^{s}_{1, K})$.
    Denote $\widehat{L}_\mathsf{I}(z, {\bm\theta} \mid K)$ as the \emph{empirical} membership advantage under finite resource constraints. The \emph{gap} between theoretical and empirical membership inference advantage is:
    \begin{equation}
        \varepsilon_\mathsf{I} (z, {\bm\theta} \mid K) :=
        L_\mathsf{I}(z, {\bm\theta}) - \widehat{L}_\mathsf{I}(z, {\bm\theta} \mid K).
    \end{equation}
    Then we have:
    \begin{equation}
        \varepsilon_\mathsf{I} (z, {\bm\theta} \mid K) \leq
        \left\| \widehat{\Delta}(x) - \Delta(x) \right\|_1 ,
    \end{equation}
    in which:
    \begin{equation}
        \widehat{\Delta}(x) = \frac{1}{K} \sum_{i=1}^{K}
        \left[ \rho_{{\bm\theta}^{s}_{1, i}}(x) - \rho_{{\bm\theta}^{s}_{0, i}}(x) \right].
    \end{equation}
    The advantage gap between the theoretical optimum and the empirical result is therefore bounded by how well the $K$ surrogate pairs represent the true population of trained models.
\end{theorem}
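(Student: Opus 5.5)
The plan is to define the empirical adversary concretely as a plug-in Helstrom test built from the surrogate estimate $\widehat{\Delta}(x)$, and then to bound its loss against the optimum through the variational characterization of trace distance. Under $\mathsf{Q}_k$, the optimal test (Helstrom) for distinguishing $\Omega_0(x)$ and $\Omega_1(x)$ with equal priors is the projector $P^*$ onto the positive eigenspace of $\Delta(x)$. This test achieves
\[
P_{\mathrm{succ}} = \tfrac12 + \tfrac12 \trace\big(P^*\Delta(x)\big),
\qquad
L_\mathsf{Q} = \max_{0\le P\le I}\trace\big(P\Delta(x)\big) = \tfrac12\|\Delta(x)\|_1,
\]
where the last equality uses $\trace\Delta(x)=0$. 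The finite-resource adversary cannot compute $P^*$, so it uses $\widehat{P}$, the projector onto the positive eigenspace of $\widehat{\Delta}(x)$. Its realized advantage on the true hypotheses is $\widehat{L}_\mathsf{Q} = \trace(\widehat{P}\Delta(x))$. Here $\widehat{\Delta}$ is built from $\rho^{\otimes k}$ of the shadow models, so that it matches $\Omega_b$ in Definition \ref{def:state-discrepancy}.

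The key step is the standard plug-in decomposition:
\[
\varepsilon_\mathsf{Q}
= \trace(P^*\Delta) - \trace(\widehat{P}\Delta)
= \underbrace{\trace(P^*\Delta) - \trace(P^*\widehat{\Delta})}_{(i)}
+ \underbrace{\trace(P^*\widehat{\Delta}) - \trace(\widehat{P}\widehat{\Delta})}_{(ii)}
+ \underbrace{\trace(\widehat{P}\widehat{\Delta}) - \trace(\widehat{P}\Delta)}_{(iii)}.
\]
Term (ii) is $\le 0$ because $\widehat{P}$ maximizes $\trace(P\widehat{\Delta})$ over $0\le P\le I$. Terms (i) and (iii) have the form $\trace\big(P(\Delta-\widehat{\Delta})\big)$ for some $0\le P\le I$. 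By Hölder's inequality with $\|P\|_\infty\le 1$, each is at most $\|\widehat{\Delta}-\Delta\|_1$. Together this gives $\varepsilon_\mathsf{Q}\le 2\|\widehat{\Delta}-\Delta\|_1$.

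To get the stated constant 1 rather than 2, I will sharpen the bound on (i) and (iii) jointly. Combining them gives $\trace\big((P^*-\widehat{P})(\Delta-\widehat{\Delta})\big)$. Since $-I\le P^*-\widehat{P}\le I$, Hölder again bounds this by $\|\Delta-\widehat{\Delta}\|_1$, which yields the claim. If needed, a cleaner alternative is the reverse triangle inequality: the plug-in advantage is at least $\tfrac12\|\widehat{\Delta}\|_1 - \tfrac12\|\widehat{\Delta}-\Delta\|_1$, hence at least $\tfrac12\|\Delta\|_1 - \|\widehat{\Delta}-\Delta\|_1$.

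For $\mathsf{M}_k$ the same argument runs over the restricted class of repeated Pauli measurements. The adversary picks $\widehat{M}$ and the outcome-set test maximizing the empirical TV in Eq. \ref{eq:measurement-opt}. Every such test corresponds to an effect operator $0\le P\le I$ built from Pauli POVM elements, and the empirical maximizer beats the true maximizer on $\widehat{\Delta}$. The same three-term split with Hölder therefore gives the bound, possibly improved since the relevant norm is a measured norm that is at most $\|\cdot\|_1$. The main obstacle is definitional rather than technical. The paper has not pinned down what "empirical advantage" means, so I must fix $\widehat{L}_\mathsf{I}$ as the true advantage of the plug-in optimal test and not its estimated value. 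I must also treat $\widehat{\Delta}$ as a fixed realization (conditioning on the shadow models), so that the bound holds pointwise rather than in expectation.
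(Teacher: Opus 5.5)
Your argument is correct, but it takes a different route from the paper and bounds a different quantity.

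The paper sets $\widehat{L}_\mathsf{I}=\TV\big(M_u(\widehat\Omega_1),M_u(\widehat\Omega_0)\big)$, i.e.\ the advantage \emph{evaluated on the surrogate mixtures}. This is an in-sample estimate that can exceed $L_\mathsf{I}$, as Observation~\ref{obs:lower-bound-eps} presumes. The paper then argues by a triangle inequality through $\TV\big(M_u(\Omega_1),M_u(\Omega_0)\big)$: one leg comes from contractivity of trace distance under $M_u$, and the other is asserted ``similarly''.

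You instead take $\widehat{L}_\mathsf{I}=\trace(\widehat P\Delta)$, the realized advantage of the plug-in test, and work with effect operators. Optimality of $\widehat P$ for $\widehat\Delta$ kills (ii), and H\"older with $\|P^*-\widehat P\|_\infty\le 1$ handles (i)+(iii).

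What your version buys:
\begin{itemize}
\item It measures an attack the adversary can actually run, and makes $\varepsilon_\mathsf{I}\ge 0$ automatic.
\item Its constant $1$ is tight for that notion: take $\Delta=\mathrm{diag}(a,-a)$, $\widehat\Delta=\mathrm{diag}(-b,b)$, $b\to 0$.
\item It supplies what the paper's second leg silently needs. Contractivity alone cannot bound $\sup_M\TV\big(M(\Omega_1),M(\Omega_0)\big)-\TV\big(M_u(\Omega_1),M_u(\Omega_0)\big)$: a trivial $M_u$ with $\widehat\Delta=\Delta$ makes this positive while the right-hand side vanishes. One must use that $M_u$ is the empirical optimizer, which is exactly your step (ii). Your estimate then controls that leg, because $\TV\big(M_u(\Omega_1),M_u(\Omega_0)\big)\ge\trace(\widehat P\Delta)$.
\end{itemize}

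Finally, you need not present your definitional choice as forced. Under the paper's in-sample reading, $L_\mathsf{I}-\widehat L_\mathsf{I}=\trace(P^*\Delta)-\trace(\widehat P\widehat\Delta)$ is exactly your (i)+(ii). It is therefore at most $\trace\big(P^*(\Delta-\widehat\Delta)\big)\le\tfrac12\|\Delta-\widehat\Delta\|_1$, since $\Delta-\widehat\Delta$ is traceless. The symmetric argument bounds $\widehat L_\mathsf{I}-L_\mathsf{I}$ the same way. So your decomposition also recovers the paper's two-sided statement, with the better constant $\tfrac12$.
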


\begin{proof}
    Similar to Def. \ref{def:state-discrepancy}, we write:
    \begin{equation}\label{eq:omega-hat}
        \widehat{\Omega}_{b}(x) = \frac{1}{K} \sum_{i=1}^{K} \rho_{{\bm\theta}_{b, i}}(x)^{\otimes k},
    \end{equation}
    then:
    \begin{equation}
        \widehat{\Delta}(x) = \widehat{\Omega}_1(x) - \widehat{\Omega}_0(x)
    \end{equation}
    is the \emph{observed} membership-conditioned discrepancy. The membership advantage of an attack based on observations from the $K$ surrogate models is:
    \begin{equation}
        \widehat{L}_\mathsf{I}(z, {\bm\theta} \mid K) =
        \TV \big( 
        M_u ( \widehat{\Omega}_1 ),
        M_u ( \widehat{\Omega}_0 )
        \big).
    \end{equation}
    Recall Theorem \ref{th:post-measure-equiv}, trace distance is contractive, which gives:
    \begin{equation}
    \begin{aligned}
        &\abs{
        \TV \big( M_u ( \widehat{\Omega}_1 ), M_u ( \widehat{\Omega}_0 ) \big) -
        \TV \big( M_u ( {\Omega}_1 ), M_u ( {\Omega}_0 ) \big)
        } \\
        &\leq \frac{1}{2} \left\| \Delta - \widehat{\Delta} \right\|_1 ,
    \end{aligned}
    \end{equation}
    and similarly:
    \begin{equation}
    \begin{aligned}
        &\abs{
        \sup_M \TV \big( M ( {\Omega}_1 ), M ( {\Omega}_0 ) \big) -
        \TV \big( M_u ( {\Omega}_1 ), M_u ( {\Omega}_0 ) \big)
        } \\
        &\leq \frac{1}{2} \left\| \Delta - \widehat{\Delta} \right\|_1 ,
    \end{aligned}
    \end{equation}
    in which the first term is equal to $L_\mathsf{I}(z; \bm\theta)$ for both $\mathsf{M}$ and $\mathsf{Q}$. By the triangle inequality:
    \begin{equation}
        \abs{L_\mathsf{I}(z, {\bm\theta}) - \widehat{L}_\mathsf{I}(z, {\bm\theta} \mid K)}
        \leq \left\| \Delta(x) - \widehat{\Delta}(x) \right\|_1 ,
    \end{equation}
    which proves the theorem.
\end{proof}

\begin{tbox}
\begin{observation}[Lower bound on $\varepsilon_\mathsf{I}$]\label{obs:lower-bound-eps}
    Note that Theorem \ref{thm:finite} deliberately avoids giving a lower bound on $\varepsilon_\mathsf{I} (z, {\bm\theta} \mid K)$, as we can easily construct a one-shot observation ($k=1$) result such that $L_\mathsf{I} < 1$ and $\widehat{L}_\mathsf{I} = 1$. $\varepsilon_\mathsf{I} (z, {\bm\theta} \mid K)$ should therefore be studied in the form of \emph{expectations}, which we expand in Corollary \ref{cor:exp-bound}.
\end{observation}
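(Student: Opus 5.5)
The claim in Observation \ref{obs:lower-bound-eps} is existential. Its content is that $\varepsilon_\mathsf{I}(z,{\bm\theta}\mid K)$ can be strictly negative, so no nontrivial deterministic lower bound holds. I plan to prove it with an explicit single-qubit counterexample with $k=1$ and $K=1$, in which $L_\mathsf{I}<1$ while $\widehat{L}_\mathsf{I}=1$ with constant probability over the draw of the surrogate pair.

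\textbf{Construction.} Take a one-qubit QNN in the sense of Def.~\ref{def:qml} with trivial encoding $U_{\mathrm{enc}}(x)=I$ and $U({\bm\theta})=R_X(\theta)$. Let the training algorithm $\mathcal{A}$ be randomized (for example through random initialization or SGD noise) and output only $\theta\in\{0,\pi\}$, so that $\rho_{{\bm\theta}}(x)\in\{\ketbra{0}{0},\ketbra{1}{1}\}$. Concretely:
\begin{itemize}
\item on $D_1$ (member), $\mathcal{A}$ returns $\theta=0$ with probability $3/4$;
\item on $D_0$ (non-member), $\mathcal{A}$ returns $\theta=\pi$ with probability $3/4$.
\end{itemize}
By Def.~\ref{def:state-discrepancy} this gives
\begin{equation*}
\Omega_1=\mathrm{diag}(3/4,1/4),\qquad \Omega_0=\mathrm{diag}(1/4,3/4),\qquad \Delta(x)=\mathrm{diag}(1/2,-1/2).
\end{equation*}
Eq.~\eqref{eq:quantum-opt} then yields $L_\mathsf{Q}=\tfrac12\|\Delta(x)\|_1=\tfrac12$. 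Because the states are diagonal, the Pauli-$Z$ measurement attains the Helstrom bound, and Eq.~\eqref{eq:measurement-opt} gives $L_\mathsf{M}=\tfrac12$ as well. Both advantages are therefore strictly less than $1$.

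\textbf{Empirical side.} With $K=1$, the adversary draws one surrogate pair. With probability $(3/4)^2=9/16$ this draw gives $\rho_{{\bm\theta}^s_{1,1}}=\ketbra{0}{0}$ and $\rho_{{\bm\theta}^s_{0,1}}=\ketbra{1}{1}$. Then $\widehat{\Omega}_1$ and $\widehat{\Omega}_0$ are orthogonal pure states, and the empirical advantage $\widehat{L}_\mathsf{I}$, as defined in the proof of Theorem~\ref{thm:finite} with $M_u=Z$, equals $\TV(\delta_0,\delta_1)=1$. Hence $\varepsilon_\mathsf{I}=\tfrac12-1=-\tfrac12<0$ on an event of probability $9/16$.

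\textbf{Consistency and remaining checks.} In this event $\widehat{\Delta}(x)=\mathrm{diag}(1,-1)$, so $\|\widehat{\Delta}(x)-\Delta(x)\|_1=1$. The upper bound of Theorem~\ref{thm:finite} indeed holds, and it is the lower side that fails. Symmetrically, a draw of $(\ketbra{1}{1},\ketbra{0}{0})$, which has probability $1/16$, gives $\widehat{L}_\mathsf{I}=1$ under a relabeled decision rule. This fluctuation is what motivates studying $\mathbb{E}[\varepsilon_\mathsf{I}]$ in Corollary~\ref{cor:exp-bound}.

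\textbf{Main obstacle.} The delicate point is legitimacy rather than the calculation. I have to check that such an $\mathcal{A}$ is an admissible training algorithm on neighboring datasets, for instance a loss landscape with two minima at $\theta\in\{0,\pi\}$ whose basins of attraction under random initialization are biased by the presence of $z$. I also have to check that the target model ${\bm\theta}$ is random, which is what keeps $L_\mathsf{I}$ below $1$. I would state the construction abstractly, as a randomized map from neighboring datasets to parameters, since Game~\ref{game:mia} only requires this.
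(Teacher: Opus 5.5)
Your construction is correct and is what the paper intends. The paper gives no argument beyond asserting that a $k=1$ instance with $L_\mathsf{I}<1$ and $\widehat{L}_\mathsf{I}=1$ exists, and your single-qubit example supplies one: $L_\mathsf{M}=L_\mathsf{Q}=\tfrac12$, while $\widehat{L}_\mathsf{I}=1$ whenever the two surrogate outputs differ. Because $\TV$ is symmetric, that happens with probability $\tfrac{10}{16}$, with no relabeling needed.

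Two small precisions:
\begin{itemize}
\item Your example refutes a nonnegative (or deterministic) lower bound, not the random two-sided bound $|\varepsilon_\mathsf{I}|\le\|\widehat{\Delta}(x)-\Delta(x)\|_1$ that the proof of Theorem~\ref{thm:finite} actually derives. Your instance satisfies that bound, since $-\tfrac12\ge-1$, so the remark that ``the lower side fails'' should be phrased in that narrower sense.
\item To justify ``no nontrivial deterministic lower bound'' literally, let the bias $\tfrac34$ tend to $\tfrac12$. This sends $L_\mathsf{I}\to 0$ while $\widehat{L}_\mathsf{I}=1$ still occurs with probability about $\tfrac12$, so $\varepsilon_\mathsf{I}$ approaches the trivial floor $-1$.
\end{itemize}
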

\end{tbox}

Theorem \ref{thm:finite} provides an upper bound for the membership inference advantage obtained by $\Adv$ compared to the optimal theoretical advantage given by Theorem \ref{th:hierachy}. We further derive the following corollary:

\begin{corollary}[Expectation Bound on the Finite-Resource Attacks Gap]\label{cor:exp-bound}
    Under Theorem \ref{thm:finite}, let $d$ be the dimension of $\rho_{\bm\theta}(x)$, $\rho_{\bm\theta}(x)^{\otimes k}$ has dimension $d^k$. We have:
    \begin{equation}\label{eq:exp-gap}
        \mathbb{E} \left[ \varepsilon_\mathsf{I} (z, {\bm\theta} \mid K) \right] \leq \sqrt{\frac{2d^k}{K}}.
    \end{equation}
\end{corollary}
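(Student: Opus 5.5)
We start from Theorem \ref{thm:finite}, which gives the pointwise bound $\varepsilon_\mathsf{I}(z,{\bm\theta}\mid K) \le \|\widehat{\Delta}(x)-\Delta(x)\|_1$, where both operators act on the $d^k$-dimensional space of $\rho_{\bm\theta}(x)^{\otimes k}$. It therefore suffices to bound $\mathbb{E}\|\widehat{\Delta}-\Delta\|_1$. Here the expectation is over the independent training runs that produce the $K$ surrogate pairs. Write $X_i := \rho_{{\bm\theta}^s_{1,i}}(x)^{\otimes k} - \rho_{{\bm\theta}^s_{0,i}}(x)^{\otimes k}$, using the $k$-copy version of Eq.~\ref{eq:omega-hat}. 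The $X_i$ are i.i.d.\ Hermitian operators with mean $\Delta(x)$, and $\widehat{\Delta} = \frac1K\sum_i X_i$.

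The first step converts the trace norm into a Hilbert--Schmidt norm, since the latter behaves well under averaging. For any Hermitian $A$ on a $D$-dimensional space, Cauchy--Schwarz on the eigenvalues gives $\|A\|_1 \le \sqrt{D}\,\|A\|_2$. We apply this with $D=d^k$. Jensen's inequality then gives
\begin{equation*}
\mathbb{E}\|\widehat{\Delta}-\Delta\|_1 \le \sqrt{d^k}\,\sqrt{\mathbb{E}\|\widehat{\Delta}-\Delta\|_2^2}.
\end{equation*}

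The second step is a variance computation in the real Hilbert space of Hermitian matrices with inner product $\trace(AB)$. Because the $X_i$ are independent with mean $\Delta$, the cross terms vanish, and
\begin{equation*}
\mathbb{E}\|\widehat{\Delta}-\Delta\|_2^2 = \frac{1}{K}\,\mathbb{E}\|X_1-\Delta\|_2^2 \le \frac1K\,\mathbb{E}\|X_1\|_2^2 .
\end{equation*}
To bound $\|X_1\|_2^2$, we expand it for density operators $\rho,\sigma$:
\begin{equation*}
\|\rho-\sigma\|_2^2 = \trace\rho^2 + \trace\sigma^2 - 2\trace(\rho\sigma) \le 2,
\end{equation*}
since purities are at most $1$ and $\trace(\rho\sigma)\ge 0$. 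Hence $\mathbb{E}\|\widehat{\Delta}-\Delta\|_2^2 \le 2/K$. Combining the two steps yields $\mathbb{E}[\varepsilon_\mathsf{I}] \le \sqrt{2d^k/K}$, which is Eq.~\ref{eq:exp-gap}.

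The main care point is bookkeeping rather than depth. First, $\widehat{\Delta}$ and $\Delta$ must be taken consistently on the $k$-fold tensor space, since Theorem \ref{thm:finite} writes $\widehat{\Delta}$ with single copies while Eq.~\ref{eq:omega-hat} uses $k$ copies. Second, the surrogate pairs must be independent draws from the same training distribution that defines $\Omega_b$, so that $\mathbb{E}X_i=\Delta$. If the two models within a pair are correlated, that is harmless, because only independence across $i$ is used.

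A tighter constant is available: pure states give $\|X_1\|_2^2 \le 2$, with equality only for orthogonal states. It is also possible to avoid Jensen by using matrix concentration. However, the crude route above already matches the stated bound, so I would not pursue refinements.
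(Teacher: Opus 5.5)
Your proof is correct and follows the same route as the paper. You reduce via Theorem \ref{thm:finite} to bounding $\mathbb{E}\|\widehat{\Delta}-\Delta\|_1$. You use independence of the surrogate pairs, each with mean $\Delta(x)$, together with the $\sqrt{2}$ Hilbert--Schmidt bound on differences of density operators, to get $\sqrt{2/K}$. You then convert norms via $\|A\|_1 \le \sqrt{d^k}\|A\|_2$ and Jensen's inequality.

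You also spell out the variance identity $\mathbb{E}\|\widehat{\Delta}-\Delta\|_2^2 = \frac{1}{K}\mathbb{E}\|X_1-\Delta\|_2^2$, which the paper compresses into one line. You resolve the single-copy versus $k$-copy inconsistency in $\widehat{\Delta}$ the same way the paper's proof implicitly does. Your closing remark about a ``tighter constant'' from pure states does not actually improve the constant, but it lies outside the argument.
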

\begin{proof}
    Recall:
    \begin{equation}
    \begin{aligned}
        \widehat{\Delta}(x) &= \widehat{\Omega}_1(x) - \widehat{\Omega}_0(x) \\
        &= \frac{1}{K} \sum_{i=1}^{K}
        \underbrace{\left[
        \rho_{{\bm\theta}_{1, i}}(x)^{\otimes k} - \rho_{{\bm\theta}_{0, i}}(x)^{\otimes k}
        \right]}_{\widehat{\Omega}_{b, i}(x)}.
    \end{aligned}        
    \end{equation}
    Then $\mathbb{E}[\widehat{\Omega}_{b, i}(x)] = \Delta(x)$, each term is independent across surrogate pairs. Recall that the Hilbert–Schmidt norm between two density operators is at most $\sqrt{2}$, then:
    \begin{equation}
        \mathbb{E} \left[ \| \widehat{\Delta} - \Delta \|_2 \right] \leq \sqrt{2/K}.
    \end{equation}
    Because the received $k$-shot state has dimension $d^k$, $\|A\|_1\leq\sqrt{d^k}\|A\|_2$. By Jensen's inequality:
    \begin{equation}
        \mathbb{E} \left[ \varepsilon_\mathsf{I}(z, {\bm\theta} \mid K) \right] \leq
        \mathbb{E} \left[ \| \widehat{\Delta} - \Delta \|_1 \right] \leq
        \sqrt{\frac{2d^k}{K}},
    \end{equation}
    which proves the corollary.
    
\end{proof}

\begin{tbox}
\begin{observation}[Interpreting the membership inference advantage gap]\label{obs:gap}
    Together, Theorem \ref{thm:finite} and Corollary \ref{cor:exp-bound} demonstrate that there exists an (inescapable) gap between the empirical membership inference advantage and the theoretical optimum shown in Theorem \ref{th:hierachy}. This gap is controlled by two hyperparameters: 
    \begin{itemize}
        \item A larger $K$ (more surrogate models) allows $\Adv$ to close the gap, as more surrogate models allow the adversary to better approximate the overall membership-conditioned state distribution;
        
        \item Counterintuitively, a larger $k$ (more observation shots on target/surrogate models) increases the \emph{expectation} of the gap, yet as we established in Observation \ref{obs:advantage-k}, more shots increases the adversarial membership inference advantage under any access regime $\mathsf{I}$. 
        
    \end{itemize}
    The inequality in Eq. \ref{eq:exp-gap} should therefore be understood as a statement on the \emph{learnability} of the theoretical membership advantage, and not a statement on the \emph{performance} of the empirical attack. We demonstrate this tension empirically in Sec. \ref{sec:eval}.
\end{observation}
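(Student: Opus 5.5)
The plan is to derive the two bullet points of Observation \ref{obs:gap} by reading them off Theorem \ref{thm:finite} and Corollary \ref{cor:exp-bound}. The contrast with Observation \ref{obs:advantage-k} will then be justified by a monotonicity argument under partial trace.

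First, the existence and control of the gap. For fixed $z$, $\bm\theta$, $k$ and regime $\mathsf{I}\in\{\mathsf{M},\mathsf{Q}\}$, Theorem \ref{thm:finite} gives
\begin{equation*}
\varepsilon_\mathsf{I}(z,\bm\theta\mid K)\le \|\widehat{\Delta}(x)-\Delta(x)\|_1 .
\end{equation*}
Corollary \ref{cor:exp-bound} bounds its expectation by $\sqrt{2d^k/K}$. For the first bullet, I would note two things:
\begin{itemize}
\item $\widehat{\Delta}(x)$ is an average of $K$ i.i.d.\ operator-valued terms with mean $\Delta(x)$. By the variance computation already used in Corollary \ref{cor:exp-bound}, the Hilbert--Schmidt error decays as $O(K^{-1/2})$.
\item Every norm on the finite-dimensional operator space is equivalent, so the trace-norm error also vanishes as $K\to\infty$.
\end{itemize}
Hence the expected gap shrinks monotonically in the bound and tends to zero. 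For $K<\infty$ the bound is generically nonzero, because $\widehat{\Delta}$ is a random estimator that differs from $\Delta$ with positive probability whenever the trained models $\bm\theta_b$ are not deterministic. This is the sense in which the gap is inescapable for finite resources.

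Second, the dependence on $k$. The bound $\sqrt{2d^k/K}$ is increasing in $k$, since the dimension of $\rho_{\bm\theta}(x)^{\otimes k}$ grows as $d^k$. To keep the expected bound below a target $\epsilon$, one needs $K\gtrsim 2d^k/\epsilon^2$, which is exponential in $k$.

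I then need to reconcile this with Observation \ref{obs:advantage-k}. For that I will prove $L^{(k+1)}_\mathsf{I}\ge L^{(k)}_\mathsf{I}$ directly. Tracing out the last copy is a CPTP map, and it sends $\Omega_b^{(k+1)}$ to $\Omega_b^{(k)}$. Trace distance is contractive under CPTP maps, so Helstrom's expression \eqref{eq:quantum-opt} is non-decreasing in $k$.

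For $\mathsf{M}$ the argument is the same one step earlier. Any $k$-shot Pauli measurement strategy can be simulated from $k+1$ shots by ignoring one outcome, and marginalisation does not increase TV. So the supremum in \eqref{eq:measurement-opt} cannot decrease.

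Together, these show that the theoretical target grows with $k$ while the guarantee on learning it from $K$ surrogates degrades. This is exactly the learnability-versus-performance reading in the observation.

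The main obstacle is that the observation speaks of the gap itself "increasing" in expectation, whereas Corollary \ref{cor:exp-bound} only controls an upper bound. I will therefore phrase the $k$-claim as monotonicity of the guarantee rather than of $\mathbb{E}[\varepsilon_\mathsf{I}]$. If a genuine statement about the gap is wanted, I would first try a lower bound in a simple instance: take $\rho_{\bm\theta_b}(x)$ drawn from two distinct pure states with probability $1/2$. There $\mathbb{E}\|\widehat{\Delta}-\Delta\|_1$ can be computed explicitly, and one can check that it grows with $k$, because the tensor powers of the two pure states become more nearly orthogonal.

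Finally, I would defer the empirical tension to Sec.\ \ref{sec:eval}, as the observation itself indicates.
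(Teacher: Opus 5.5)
Your main line matches the paper's. The observation has no separate proof there; it is read off the monotonicity of $\sqrt{2d^k/K}$ in $K$ and $k$, together with Observation~\ref{obs:advantage-k}. You add something the paper lacks: your partial-trace argument ($\operatorname{tr}_{k+1}\Omega_b^{(k+1)}=\Omega_b^{(k)}$ plus contractivity) and your marginalisation argument for $\mathsf{M}_k$ (and identically $\mathsf{C}_k$) actually prove Observation~\ref{obs:advantage-k}, which the paper only asserts. Restating the $k$-bullet as a statement about the guarantee is also the right reading.

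Where you go wrong is in twice treating the right-hand side of Theorem~\ref{thm:finite} as if it were the gap.

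First, the ``inescapable'' argument. Knowing $\widehat{\Delta}\neq\Delta$ with positive probability does not give $\varepsilon_\mathsf{I}\neq 0$. For example, if $\widehat{\Delta}=-\Delta$, the plug-in advantage with measurement optimised on the surrogate states equals $L_\mathsf{I}$ exactly. So ``inescapable'' can only mean that the guarantee does not vanish at finite $K$, which is consistent with Observation~\ref{obs:lower-bound-eps}.

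Second, your fallback. Computing $\mathbb{E}\|\widehat{\Delta}-\Delta\|_1$ in the two-pure-state instance again only upper-bounds $|\varepsilon_\mathsf{I}|$, so its growth in $k$ says nothing about the gap itself. You have to compute the gap directly. Take $\Omega_b=p_bP_1+(1-p_b)P_2$ with $P_i=(\ketbra{\psi_i}{\psi_i})^{\otimes k}$ and $F=|\langle\psi_1|\psi_2\rangle|^2$. Let $\hat p_b$ be the fraction of the $K$ surrogates under $b$ that land on $\psi_1$, and use the paper's plug-in $\widehat{L}_\mathsf{Q}=\frac12\|\widehat{\Delta}\|_1$. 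Then $\varepsilon_\mathsf{Q}=\sqrt{1-F^k}\,\bigl(|p_1-p_0|-|\hat p_1-\hat p_0|\bigr)$.

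Two consequences follow. $\mathbb{E}|\varepsilon_\mathsf{Q}|$ is indeed non-decreasing in $k$. But $\mathbb{E}[\varepsilon_\mathsf{Q}]\le 0$ by Jensen, and it is non-increasing in $k$, which is the opposite of the literal claim.

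This sign is not an artefact of the example. $\widehat{\Delta}$ is unbiased for $\Delta$, and both $\frac12\|\cdot\|_1$ and $\sup_M\frac12\sum_y|\operatorname{tr}(E_y\,\cdot)|$ are convex. Hence the plug-in estimator with the measurement optimised on the surrogate states (Helstrom, respectively best Pauli, as in Sec.~\ref{sec:eval}) satisfies $\mathbb{E}[\widehat{L}_\mathsf{I}]\ge L_\mathsf{I}$. Any genuine version of the second bullet must therefore be stated for $\mathbb{E}|\varepsilon_\mathsf{I}|$, not for the signed expected gap.
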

\end{tbox}

\section{Evaluation}\label{sec:eval}

\subsection{Experimental Setup}\label{ssec:setup}

\paragraph{Dataset and data pre-processing} We use two subsets of the MNIST dataset \cite{LeCun1998MNIST} for binary classification tasks of distinguishing (\rom{1}) ``0'' vs ``1'' and (\rom{2}) ``3'' vs ``8'', following Su et al. \cite{Su2026MIAQMU}. 
We normalize the images in MNIST dataset to $28 \times 28$ pixels, each with values in the range $[-1,1]$, perform principal component analysis (PCA), and select the top-4 component elements as input features.

\paragraph{Models} We plot the circuit of the target and surrogate model QNNs in Fig. \ref{fig:circuit}. The input component $x_i$ is encoded by $R_Y(\pi x_i)$.  Three trainable layers apply. The model has 4 qubits with 3 variational layers that apply $R_Y$ and $R_Z$ on every qubit, followed by a ring of \texttt{CNOT} gates, 24 trainable angles in total, comparable to the models used in Heredge et al. \cite{Heredge2025Character}. The approved prediction is the expectation of $Z_0$, represented as $ZIII$. In our experiments, we study $k \in \{1, 2, 4, 8, 16, 32\}$-shot measurement outcomes.

All QNN and adversary hypothesis testing are simulated using the PennyLane library \cite{Bergholm2018PennyLane}. All models are trained for 80 epochs with Adam optimizer \cite{Kingma2015Adam} for a $5 \times 10^{-2}$ learning rate. We use binary cross-entropy as the loss function. We also use an initialization scale of $0.2$ for all models.

\begin{figure*}[t]
    \centering
    \includegraphics[width=0.9\linewidth]{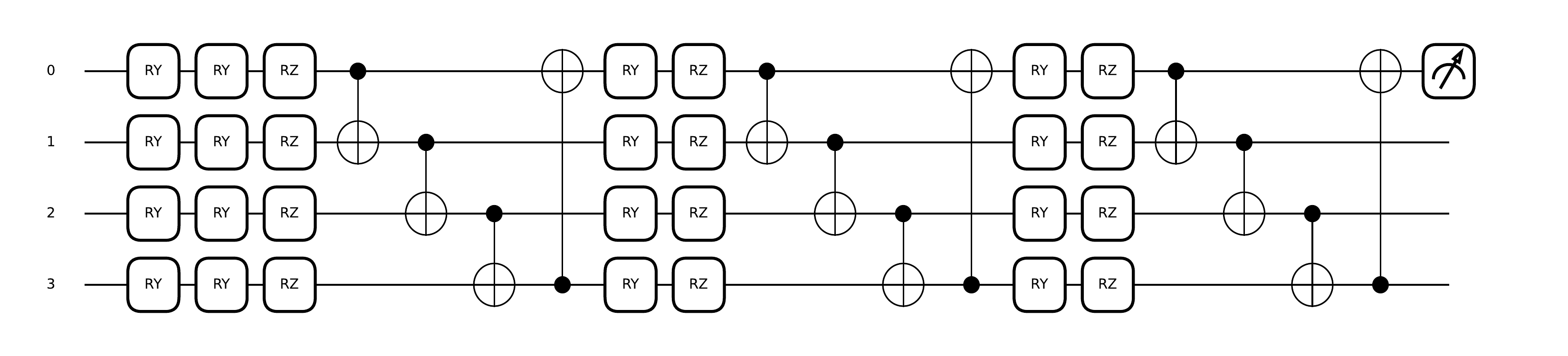}
    \caption{The target/surrogate QNN model circuit used in Sec. \ref{sec:eval}.}
    \label{fig:circuit}
\end{figure*}

\begin{figure}[t]
    \centering
    \captionsetup[subfigure]{justification=centering}
    \begin{subfigure}[b]{.49\linewidth}
        \centering
        \includegraphics[width=\textwidth]{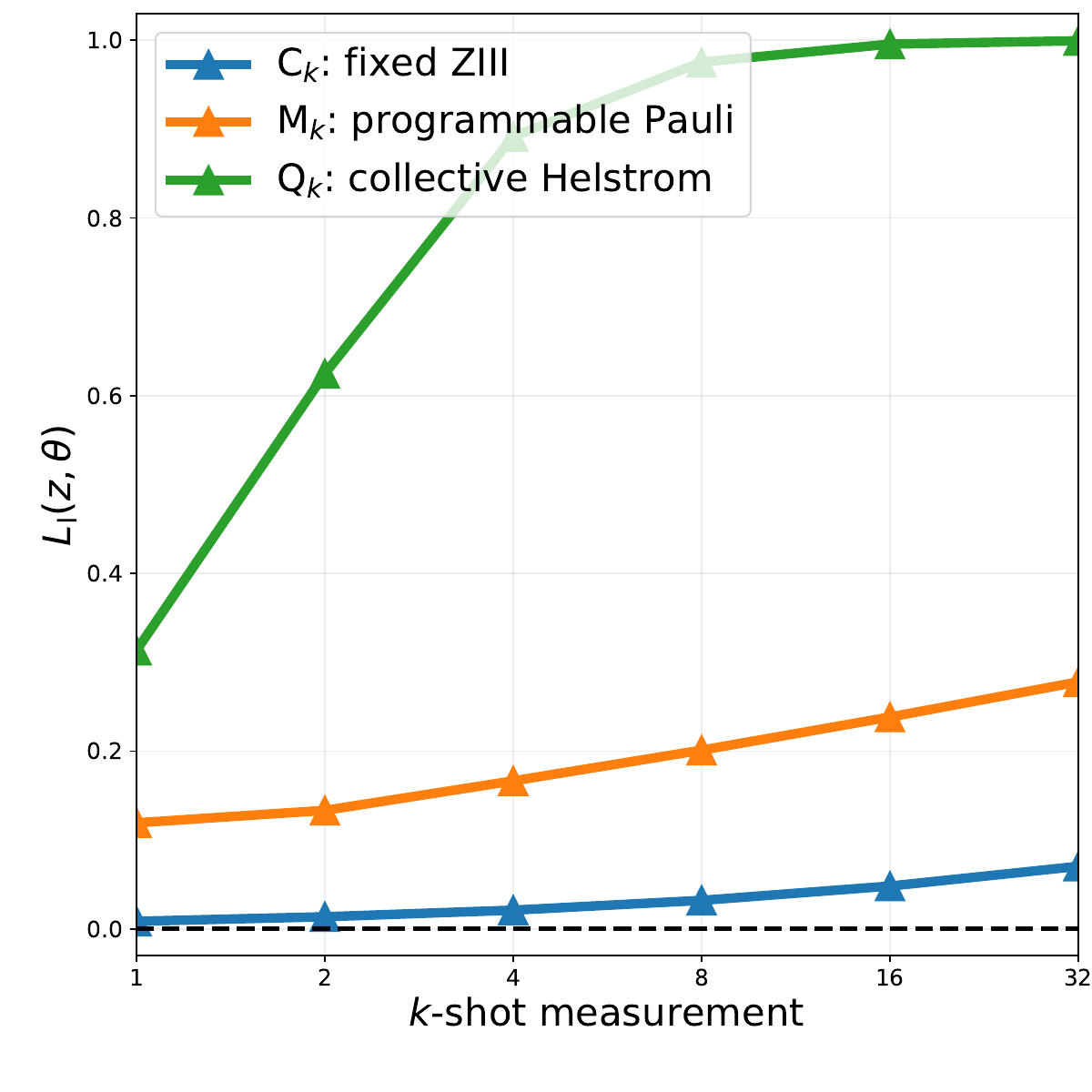}
        \caption{``0'' vs. ``1'', $K=64$.}
        \label{sfig:0v1_fix_surr}
    \end{subfigure}
    \begin{subfigure}[b]{.49\linewidth}
        \centering
        \includegraphics[width=\textwidth]{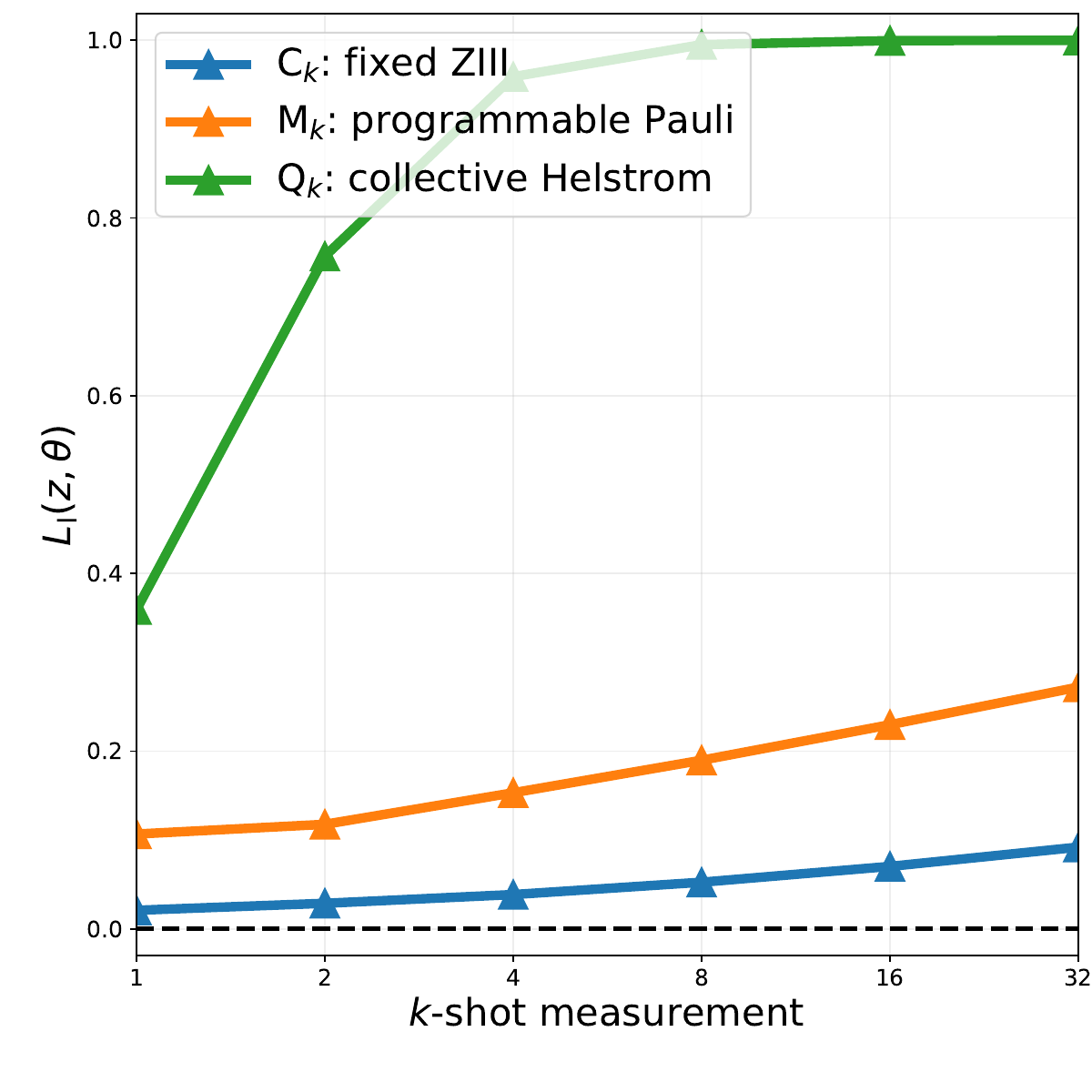}
        \caption{``3'' vs. ``8'', $K=64$.}
        \label{sfig:3v8_fix_surr}
    \end{subfigure}
    
    \begin{subfigure}[b]{.49\linewidth}
        \centering
        \includegraphics[width=\textwidth]{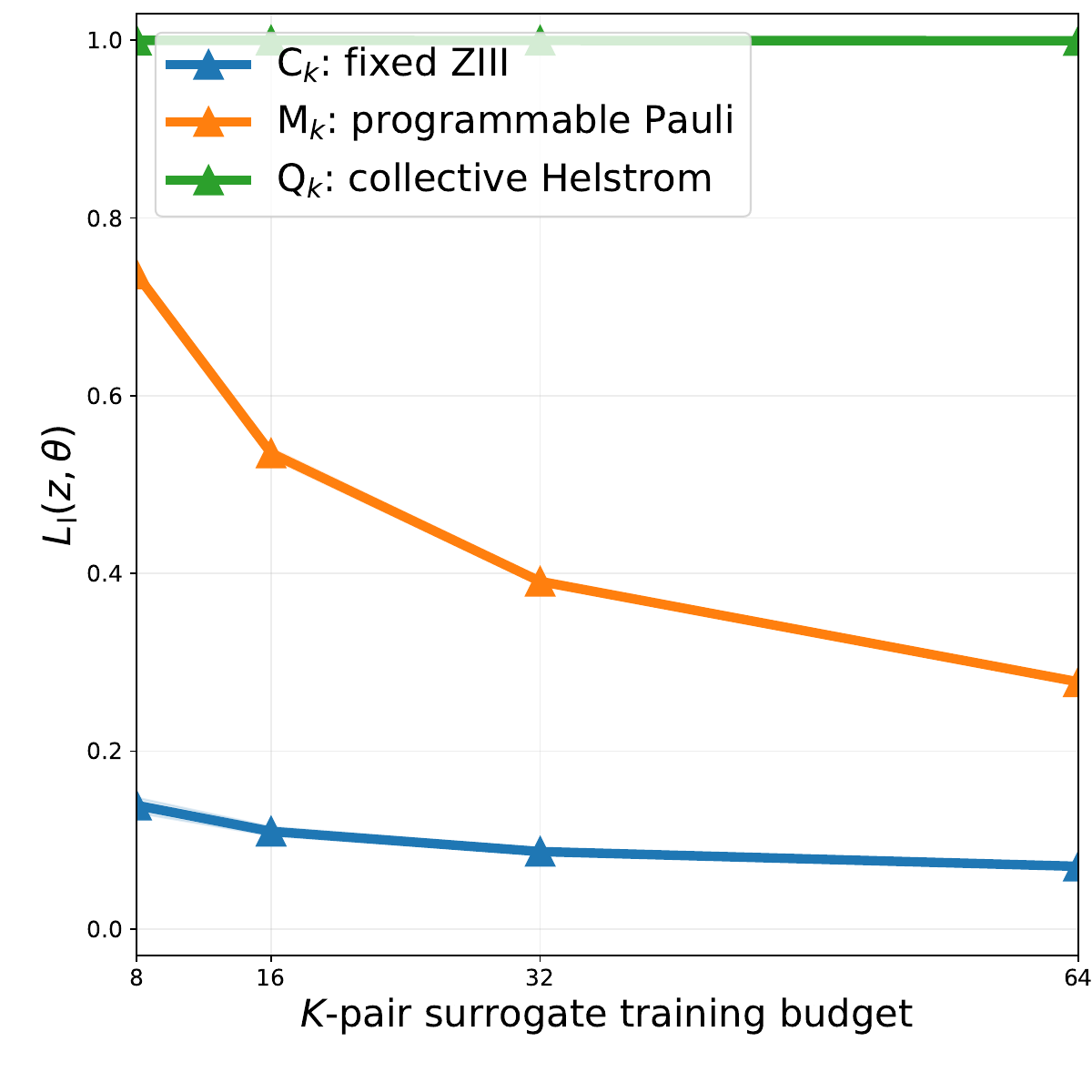}
        \caption{``0'' vs. ``1'', $k=32$.}
        \label{sfig:0v1_fix_shot}
    \end{subfigure}
    \begin{subfigure}[b]{.49\linewidth}
        \centering
        \includegraphics[width=\textwidth]{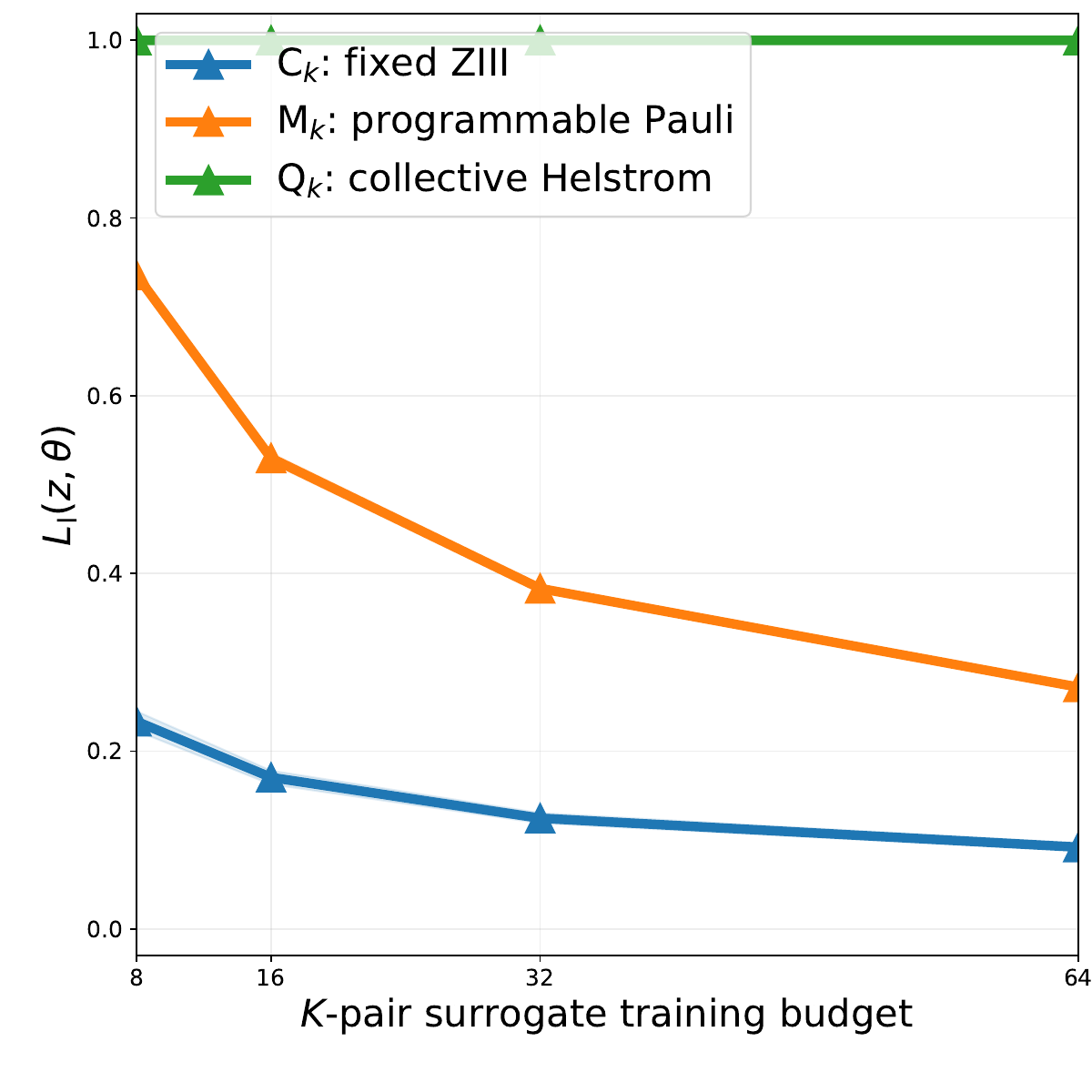}
        \caption{``3'' vs. ``8'', $k=32$.}
        \label{sfig:3v8_fix_shot}
    \end{subfigure}
    \caption{We study the membership inference advantage of $\Adv$ under different access regimes (Fig. \ref{sfig:0v1_fix_surr}-\ref{sfig:3v8_fix_shot}), measurement budget $k$ (Fig. \ref{sfig:0v1_fix_surr}, \ref{sfig:3v8_fix_surr}), and surrogate model training budget (Fig. \ref{sfig:0v1_fix_shot}, \ref{sfig:3v8_fix_shot}).}
    \label{fig:k-K-sweep}
\end{figure}

\begin{figure}[t]
    \centering
    \begin{subfigure}[b]{.49\linewidth}
        \centering
        \includegraphics[width=\textwidth]{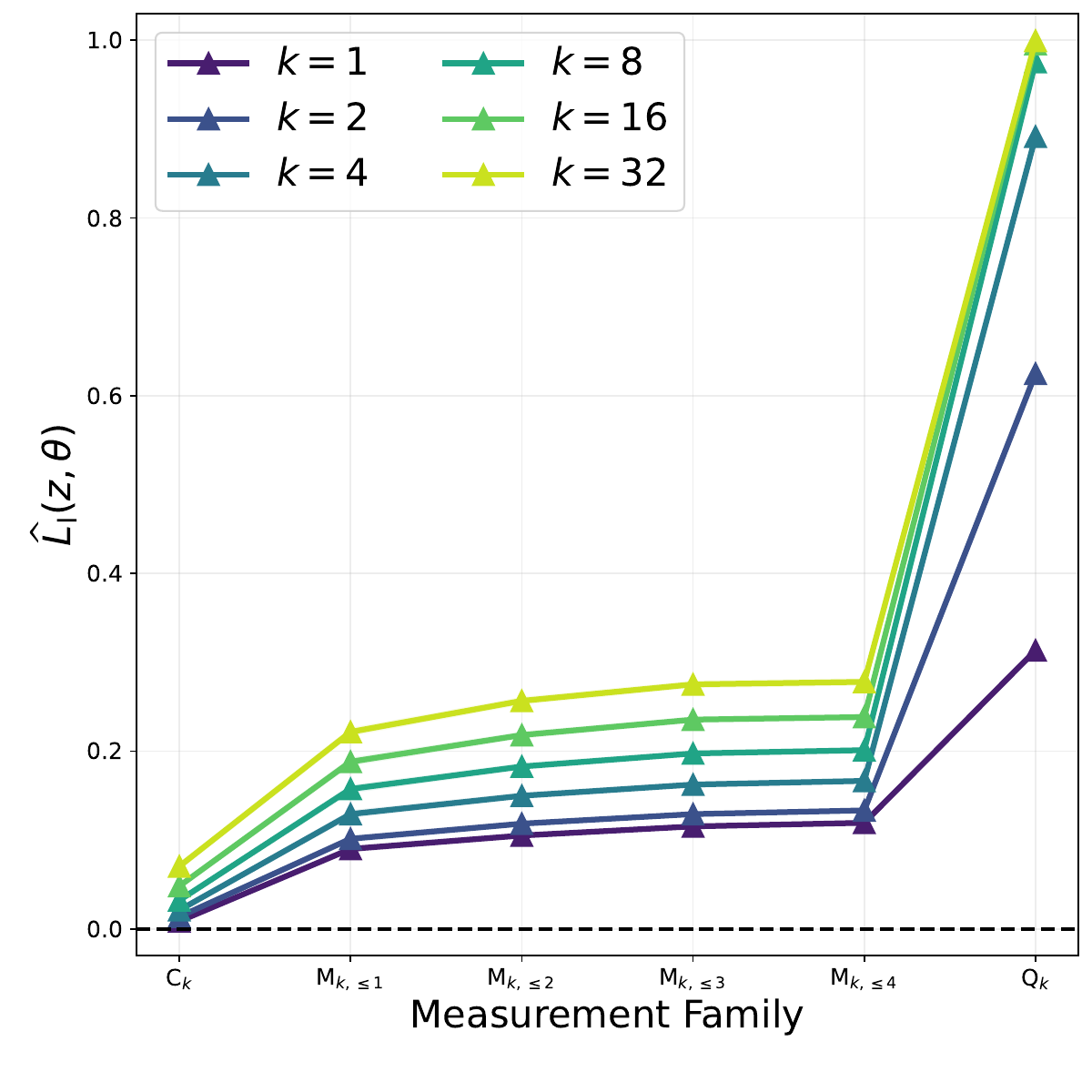}
        \caption{``0'' vs. ``1'', $K=64$.}
        \label{sfig:0v1_ladder}
    \end{subfigure}
    \begin{subfigure}[b]{.49\linewidth}
        \centering
        \includegraphics[width=\textwidth]{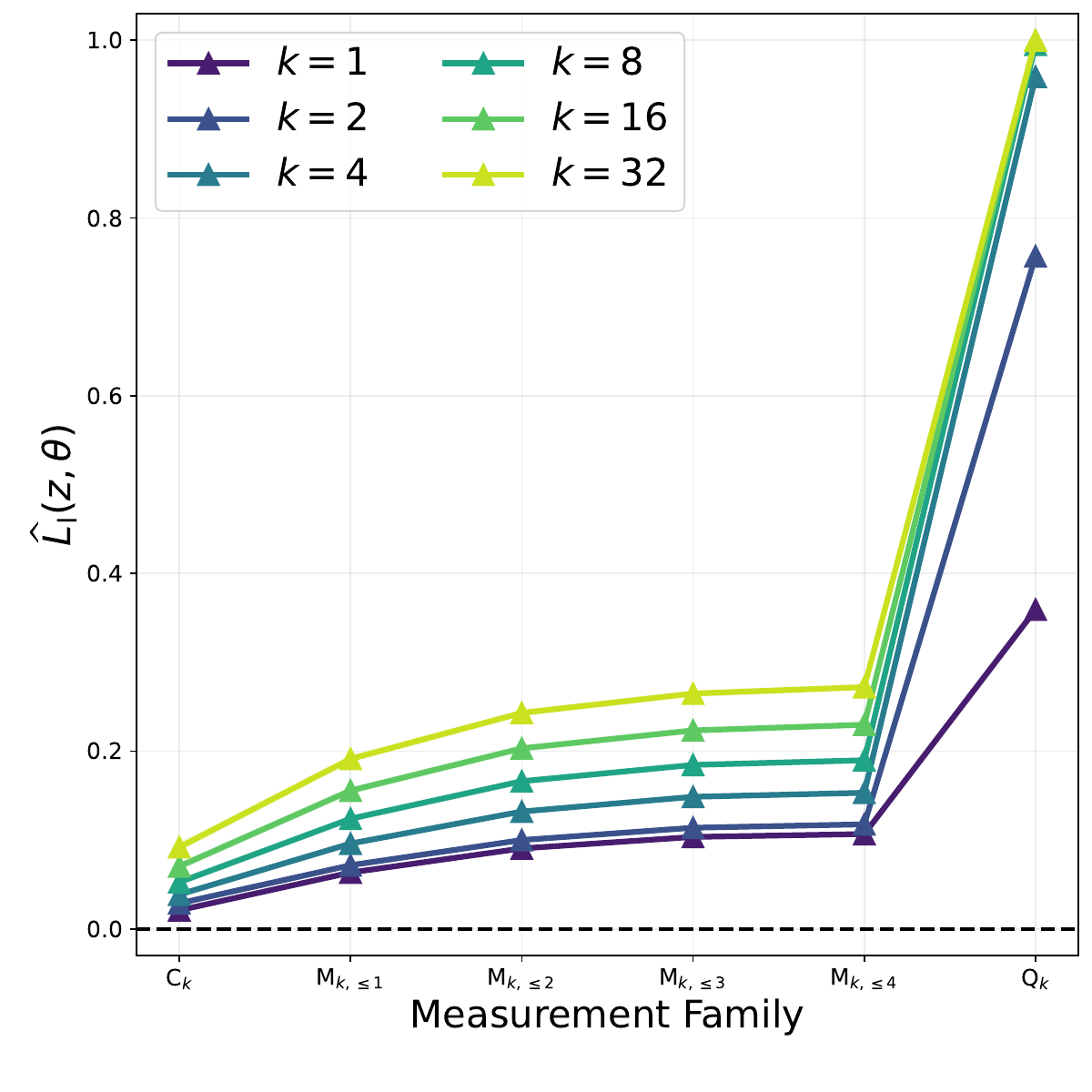}
        \caption{``3'' vs. ``8'', $K=64$.}
        \label{sfig:3v8_ladder}
    \end{subfigure}
    \caption{Membership inference advantage across POVMs with Pauli weight $\leq 4$ for $\mathsf{M}_k$.}
    \label{fig:access-ladder}
\end{figure}

\paragraph{Surrogate models} Following Carlini et al. \cite{Carlini2022MIA}, we construct the training set of the target as follows: we sample $512$ data points uniformly from the two task classes of MNIST as the global population of available data; for the training set of each surrogate model, we sample $|D|=|D^s|=256$ samples from the population. We train $K= \{ 8, 16, 32, 64 \}$ pairs of surrogate models for our experiments.

\paragraph{Membership inference attacks} As discussed in Sec. \ref{sec:theory}, we modify the features for MIAs used under each access regime: for $\mathsf{C}_k$, we fit the optimal subset of possible $ZIII$ counts; for $\mathsf{M}_k$, we evaluate all 255 possible Pauli string basis ${\langle X, Y, Z, I \rangle}^{\otimes 4}$ for $4$-qubit circuits for the supreme membership inference advantage; for $\mathsf{Q}_k$, we directly evaluate the empirical Helstrom measurement as an upper bound for the empirical performance of a quantum-capable $\Adv$ \cite{Helstrom1969Detection}.

\begin{figure*}[t]
    \centering
    \begin{subfigure}[b]{.25\linewidth}
        \centering
        \includegraphics[width=\textwidth]{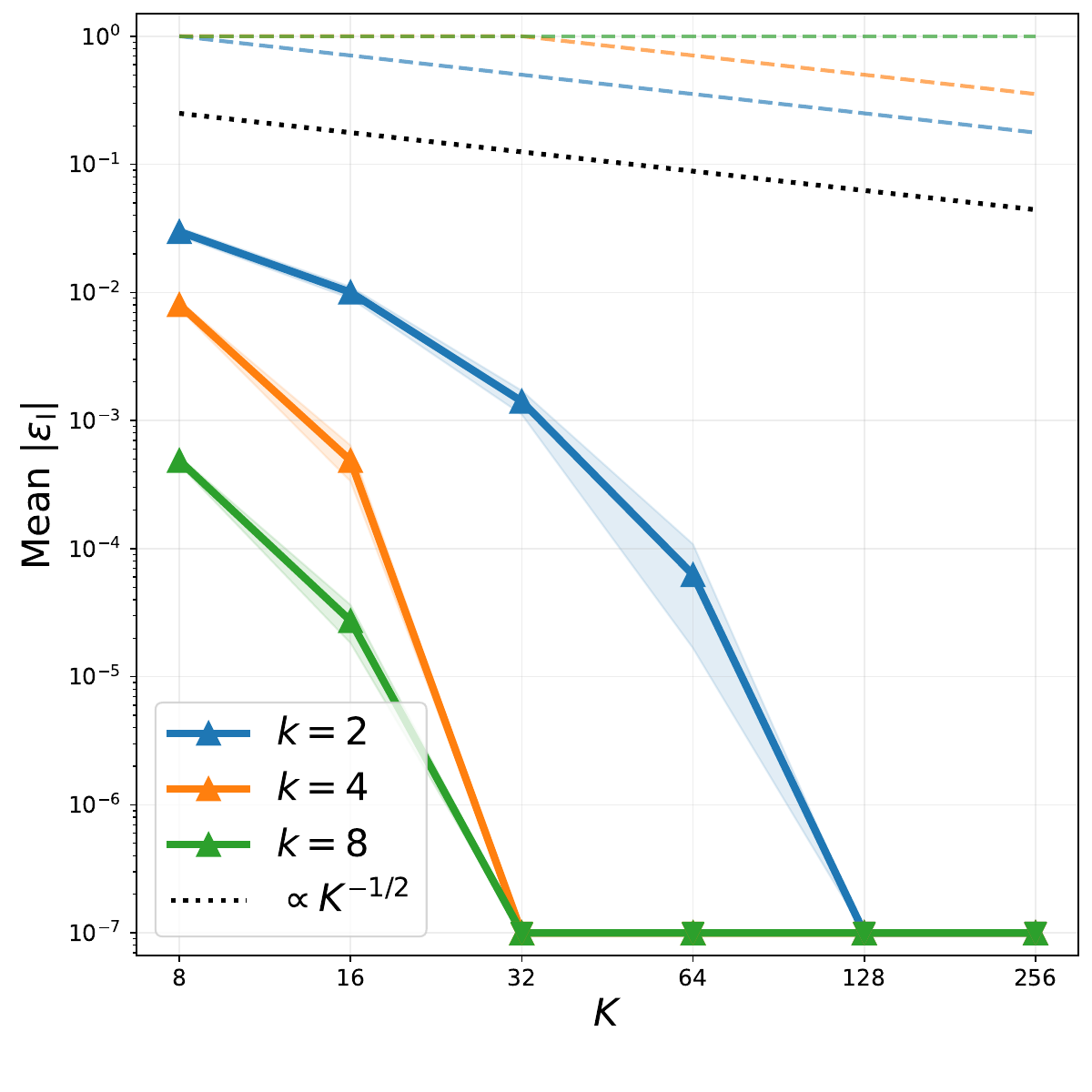}
        \caption{$\mathsf{C}_k$}
        \label{sfig:gap-C}
    \end{subfigure}
    \begin{subfigure}[b]{.25\linewidth}
        \centering
        \includegraphics[width=\textwidth]{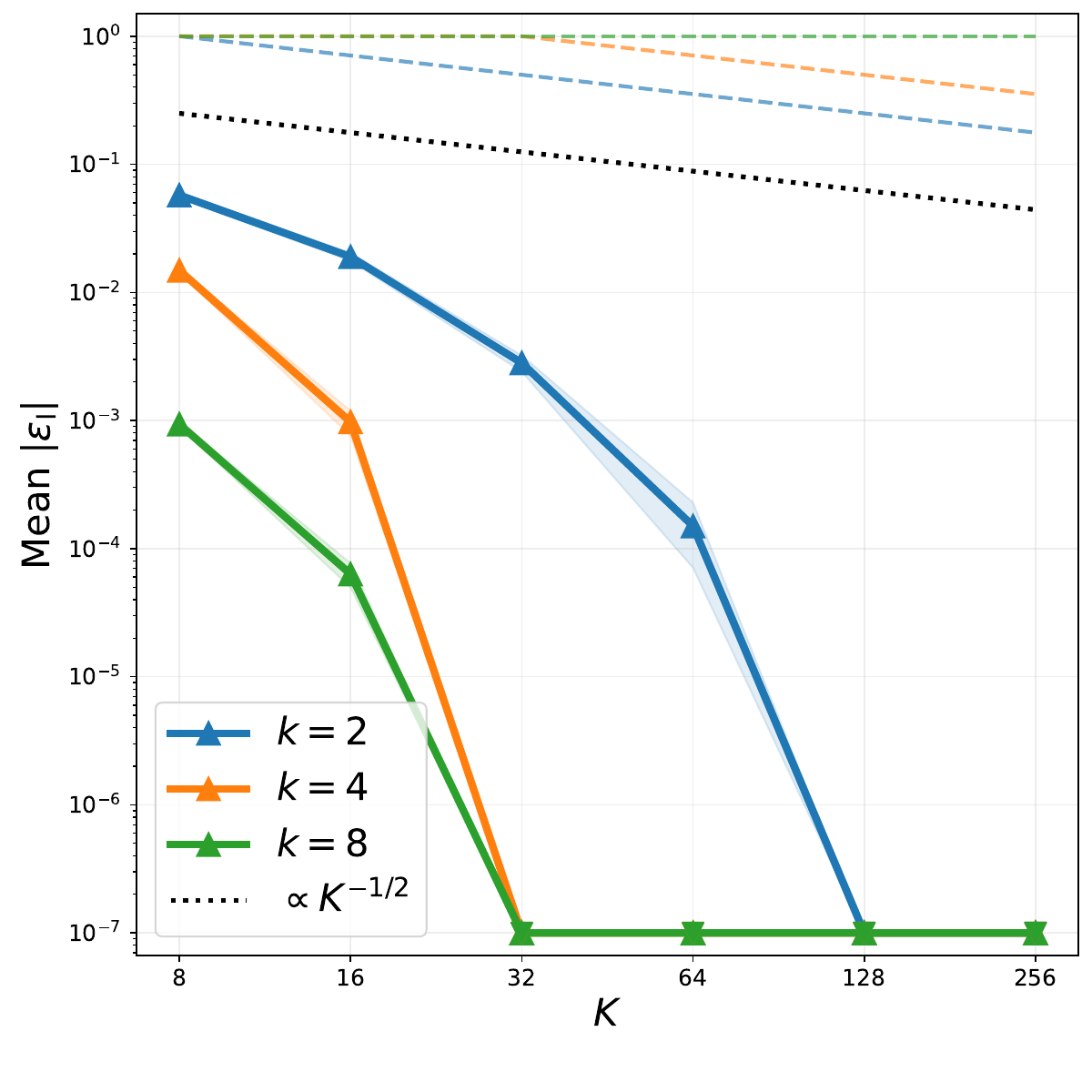}
        \caption{$\mathsf{M}_k$}
        \label{sfig:gap-M}
    \end{subfigure}
    \begin{subfigure}[b]{.25\linewidth}
        \centering
        \includegraphics[width=\textwidth]{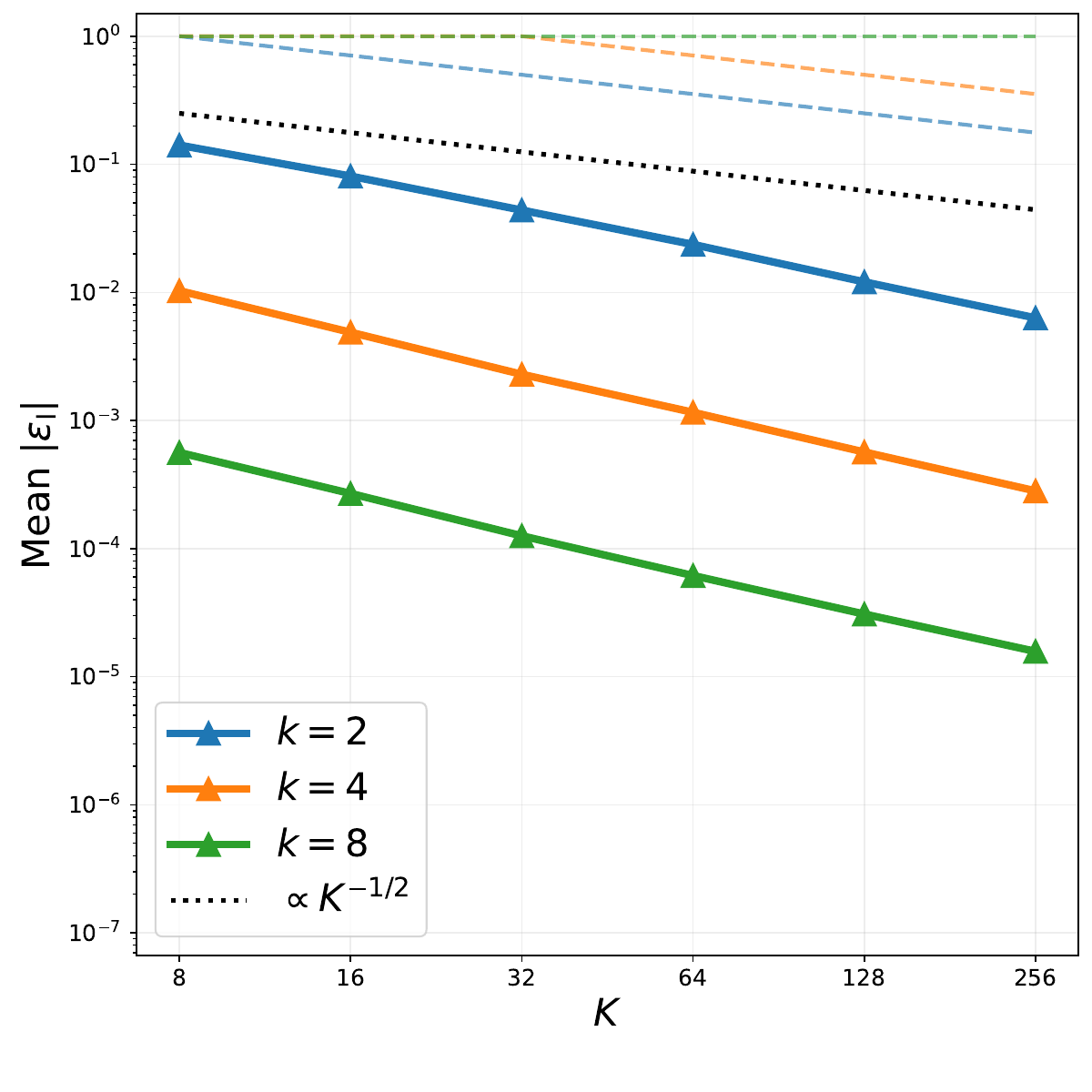}
        \caption{$\mathsf{Q}_k$}
        \label{sfig:gap-Q}
    \end{subfigure}
    \caption{Theoretical-empirical membership inference advantage gap across different access regimes.}
    \label{fig:gap}
\end{figure*}

\subsection{Results}\label{ssec:results}

\paragraph{Membership inference advantage gain} We plot the membership inference advantages across different access regimes $I \in \{ \mathsf{C}_k, \mathsf{P}_k, \mathsf{Q}_k \}$ with different shots and surrogate model training budgets. We plot the mean inference advantage for all target $z$ sampled from the data population (Sec. \ref{ssec:setup}). The results are demonstrated in Fig. \ref{fig:k-K-sweep}.

Fig. \ref{fig:k-K-sweep} strongly confirms our theoretical intuition: different access regimes impose a strict hierarchy formalized in Theorem \ref{th:hierachy}. With the same surrogate model budget, larger measurement budget $k$ increases the (empirical) adversary advantage $\widehat{L}_\mathsf{I}(z, \bm\theta)$ (Fig. \ref{sfig:0v1_fix_surr} and \ref{sfig:3v8_fix_surr}), as discussed in Observation \ref{obs:advantage-k}. Combined, these results present a clear answer to (\textbf{RQ2}): quantum access provides a demonstrable theoretical and empirical advantage for membership inference attacks.

\paragraph{Measurement selection} Beyond the 255 Pauli string basis considered for user/adversary-selected measurements used in Fig. \ref{fig:k-K-sweep}, we further provide a more fine-grained view of adversarial membership inference advantage under $\mathsf{M}_k$ by calculating $\widehat{L}_{\mathsf{M}}(z, {\bm\theta})$ under measurements with different \emph{Pauli weights}. For each measurement group with Pauli weights $\leq 1$ through $\leq 4$, we plot the supreme $\widehat{L}$ for each query budget $k$. We show the results in Fig. \ref{fig:access-ladder}, with access regimes $\mathsf{C}_k$ and $\mathsf{Q}_k$ as baselines.

From Fig. \ref{fig:access-ladder}, we establish that the specific selection of the measurement operator $M_u$ itself contributes to the membership inference advantage:  more complex measurements allow the user/adversary to preserve more information in $\rho_{\bm\theta}(x)$, leading to a strictly stronger inference advantage. At Pauli weight $\leq 4$, the adversary has a significant advantage over $\mathsf{C}_k$. As $\mathsf{M}_k$, which allows users to select measurements, represents the most eminent access regime \cite{Fu2024QuantumLeak,Devitt2016Cloud}, our result should be understood as a strong result on potential privacy leakage of QML models in the near future.

\paragraph{Theoretical-empirical leakage gap} Beyond empirical results on $\widehat{L}_\mathsf{I}(z, {\bm\theta})$, we further examine the theoretical-empirical leakage gap $\varepsilon_\mathsf{I}(z, {\bm\theta} \mid K)$ introduced in Theorem \ref{thm:finite} and Corollary \ref{cor:exp-bound}. We plot $\mathbb{E}_z \left[\varepsilon_\mathsf{I}(z, {\bm\theta} \mid K)\right]$ across different access regime $\mathsf{I}$, query budget $k$, and surrogate model pair $K$, the results are shown in Fig. \ref{fig:gap}.
Our result confirms Corollary \ref{cor:exp-bound} and Observation \ref{obs:gap}, where $k$ and $K$ jointly determine the gap between theoretical and empirical privacy leakages: a larger $K$ pushes the gap to $\to 0$, while a larger $k$ provides more information of the quantum state $\rho_{\bm\theta}(x)$.

We make two additional remarks on the result of Fig. \ref{fig:gap}. Firstly, we note that the bound given in Corollary \ref{cor:exp-bound} is not intended to be a tight bound, as demonstrated in Fig. \ref{sfig:gap-C} through \ref{sfig:gap-Q}, where $K^{-1/2}$ decays significantly slower than the actual observation. Secondly, different access regimes also determine how quickly the gap decays, where rich quantum information ensures the gap decays exponentially for $\mathsf{Q}_k$; adversaries can close the gap at a super-exponential rate for both $\mathsf{C}_k$ and $\mathsf{M}_k$, where existing works have demonstrated near-perfect MIAs under $\mathsf{C}_k$. We leave a tighter bound on this gap between theoretical and empirical privacy leakage to future works.


\section{Discussion and Future Work}\label{sec:dis}

\subsection{How Realistic are Quantum-Access Regimes?}\label{ssec:realistic}

As discussed in Sec. \ref{sec:intro} and \ref{sec:theory}, we again stress note that the three access regimes studied in this paper, $\mathsf{C}_k$, $\mathsf{M}_k$, and $\mathsf{Q}_k$ should be understood as abstractions of increasingly quantum-native service interfaces rather than all realized quantum computing services, in which:
\begin{itemize}
    \item $\mathsf{C}_k$ is well-aligned with QML-as-a-service, in which users remotely execute a model or circuit while receiving only classical measurement outcomes \cite{Garcia2021QaaS,Moguel2022QS}. 
    
    \item Remote access to programmable quantum circuits necessary for $\mathsf{M}_k$ has been discussed for cloud quantum-computing platforms by Devitt \cite{Devitt2016Cloud} and used in recent quantum devices \cite{Abughanem2025IBM}; Fu et al. \cite{Fu2024QuantumLeak} further studied remote query attacks against quantum cloud services.

    $\mathsf{M}_k$ therefore describes a plausible near-term regime in which the service provider retains the quantum computer and returns classical bits through regular channels, while still allowing user control over the measurement.

    \item The strongest $\mathsf{Q}_k$ regime remains more distant, but is motivated by the development of quantum networking (QN) and distributed quantum computing (DQC), which connect remote quantum processors through quantum communication \cite{Wehner2018Quantum} to exchange quantum resources for joint execution of algorithms \cite{Caleffi2024Survey}. Thus, it is necessary to consider privacy risks for QML models in a fully quantum-native environment, in which quantum states are directly exposed and measured only at the user end.
\end{itemize}
Our results do \emph{not} rely on any specific realizations of $\mathsf{M}_k$ or $\mathsf{Q}_k$, but rather aim to study the role of a \emph{classicalization} boundary in QC and QML in particular. We leave a more detailed analysis of MIA over the quantum internet for future work.

\subsection{Quantum Defenses in Quantum-Access Regimes}\label{ssec:def}

Beyond a characterization of privacy risks of QML models in the form of susceptibility to membership inference attacks, one research question naturally emerges: \emph{how can we defend QML models under these emergent quantum-native access regimes}, for which we propose several possible defense strategies.

Intuitively, as shown both theoretically (Theorem \ref{th:hierachy}) and empirically (Fig. \ref{fig:k-K-sweep} and \ref{fig:access-ladder}), $\mathsf{C}_k$ limits the available information to the adversary, and therefore itself constitutes a strong defense compared to $\mathsf{M}_k$ or $\mathsf{Q}_k$. A service provider can approach a bound on membership information leakage by restricting allowed user measurements $M_u$ for regime $\mathsf{M}_k$ (Fig. \ref{fig:access-ladder}) or by limiting the measurement budget $k$. However, this defense poses a strong \emph{privacy-utility trade-off} in which the performance of QML models is impacted, as discussed by Heredge et al. \cite{Heredge2025Character}.

An alternative line of defense is therefore to make the QML model \emph{itself} private on the training set $D$ through \emph{quantum differential privacy} (QDP) \cite{Zhou2017DPQC,Watkins2023QDP,Du2022QDPReg}. By definition, QDP-trained QML models bound the information contribution of individual training data to the trainable parameter $\bm\theta$ \cite{Dwork2006Calibrating}. Therefore, $\rho_{\bm\theta}(x)$ exposed by $\mathsf{Q}_k$, $M( \rho_{\bm\theta}(x) )$ exposed by $\mathsf{M}_k$ and $\mathsf{C}_k$, and the subsequent membership inference attacks are \emph{post-processing} of a differentially private $\bm\theta$, and cannot distinguish the membership status of any target sample $z$ beyond a $\varepsilon$- or $(\varepsilon, \delta)$-bound \cite{Dwork2006Calibrating,Zhou2017DPQC}. Moreover, existing works have demonstrated that quantum noise inherent in NISQ devices can itself serve as a source of privacy perturbation for QDP during the training of QML models \cite{Du2021QNoise,Watkins2023QDP,Ju2024Harness}. Therefore, there is a need for establishing a formal bound on the privacy leakage of differentially private QNNs under $\mathsf{C}_k$, $\mathsf{M}_k$, and $\mathsf{Q}_k$, respectively.

\section{Conclusion}\label{sec:conclusion}

\paragraph{Contributions} In this paper, we examine privacy leakages of quantum machine learning (QML) models in emergent quantum-native access regimes, in which users/adversaries possess quantum computing abilities and access to quantum information as outputs. We establish a hierarchy for adversarial advantage of membership inference attacks (MIAs) against the model under three regimes: classical outcome-only $\mathsf{C}$, programmable measurement $\mathsf{M}$, and quantum state access $\mathsf{Q}$, $\mathsf{C} \preceq \mathsf{M} \preceq \mathsf{Q}$. In addition, we prove a bound on the performance of empirical MIAs under limited query and surrogate-model training budgets. Our experimental results strongly confirm our theory and demonstrate how quantum information access and adversary computational resources jointly determine attack outcomes against a theoretically optimal leakage. Our work establishes a first characterization of QML privacy under quantum-native access

\paragraph{Limitations} Our work studies a small-scale (4-qubit, 24-parameter) QNN model under noiseless simulation, as demonstrated by us and confirmed by existing research on MIAs \cite{Carlini2022Onion}, MIA susceptibility are target instance- and model-dependent. Therefore, future research is much needed for a more generalized understanding of privacy risks of QMLs across a wider family of models (Def. \ref{def:qml}), wider definition of measurement operators (Table \ref{tab:access}), and larger-scale, noisy experiments (Sec. \ref{sec:eval}), which we aim to address in follow-up works.





\section*{Acknowledgments}
This work is funded by Cisco Research.

\bibliographystyle{IEEEtran}
\bibliography{IEEEabrv, bib}





\end{document}